\newtheorem{theorem}{Theorem}[section]
\newtheorem{corollary}[theorem]{Corollary}
\newtheorem{lemma}[theorem]{Lemma}
\newtheorem{proposition}[theorem]{Proposition}
\theoremstyle{definition}
\newtheorem{ex}[theorem]{Example}
\newtheorem{definition}[theorem]{Definition}
\newcommand{\tr}{{\operatorname{Tr}}}
\title[Interleavers from permutation functions]
      {Cycle structure of permutation functions over finite fields and their applications}
\author[Amin Sakzad, Mohammad-Reza Sadeghi and Daniel Panario]{}
\subjclass{Primary: 12E20, 12E05, 94B60; Secondary: 94A05, 94A24, 94B10}
 \keywords{Interleavers and permutation functions over finite fields}
\thanks{Part of the material in this paper was presented at 48th Annual Allerton Conference, USA, 2010.}
\begin{document}
\maketitle

\centerline{\scshape Amin Sakzad and Mohammad-Reza Sadeghi}
\medskip
{\footnotesize
 \centerline{Department of Mathematics and Computer Science}
  \centerline{Amirkabir University of Technology, Tehran, Iran}
}

\bigskip

\centerline{\scshape Daniel Panario}
\medskip
{\footnotesize
 \centerline{School of Mathematics and Statistics}
  \centerline{Carleton University, Ottawa, Canada}
}

\bigskip


\begin{abstract}
In this work we establish some new interleavers based on permutation functions.
The inverses of these interleavers are known over a finite field $\mathbb{F}_q$.
For the first time M\"{o}bius and R\'edei functions are used to give new
deterministic interleavers. Furthermore we employ Skolem sequences
in order to find new interleavers with known cycle structure.
In the case of R\'edei functions an exact formula for the inverse function is derived.
The cycle structure of R\'edei functions is also investigated.
The self-inverse and non-self-inverse versions of these permutation
functions can be used to construct new interleavers.
\end{abstract}

\section{Introduction}
\label{Introduction}
Interleavers have a lot of impact on various aspects of Communication Theory.
For example, a crucial role in designing good turbo codes and turbo lattices
is played by interleavers~\cite{costello, sakzad10, Sakzad10}.
The distance properties of turbo codes can change dramatically from one interleaver to another~\cite{Boutros06}.
In addition, bit interleavers are employed at the encoder of a coded modulation scheme
over fading channels to improve the error performance~\cite{Caire98}.

Several studies have been conducted on deterministic interleavers.
One of the main advantages of well-known deterministic
interleavers is that they have a simple
structure that is easy to implement. Only some defining parameters of
the interleaver, such as the coefficients of a polynomial, are stored.
Efforts in the field of deterministic
interleavers have focused on permutation polynomials~\cite{sun,tak1}.
Therefore, it seems natural to search in
the class of permutation functions for finding good interleavers.
Recent works concerning the inverse of
interleavers like~\cite{cheng,moision,ryu,takcost} motivated us to
turn our attention to permutations which their inverses
are given. Interleavers with known inverses
are of interest~\cite{cheng,moision} because the same
structure and technology used for encoding can be used for decoding as well.
The permutation polynomials used in these deterministic interleavers are all over the
integer ring $\mathbb{Z}_n$.

Providing interleavers based on permutation functions is the main
contribution of this work.
In this paper we use permutation functions over $\mathbb{F}_q$ and
Skolem sequences to create interleavers with known inverses.
Using results from finite fields, we can construct permutation
polynomials and permutation functions over $\mathbb{F}_q$.
However, only very special cases of permutation functions are
known. Some examples are permutation monomials, Dickson permutation
polynomials, nonlinear transformation (M\"{o}bius) and R\'edei
permutation functions.

The study of permutation monomials $x^n$ with a cycle of length
$j$ has been treated in~\cite{ahmad}. Permutation monomials $x^n$
with all cycles of the same length are characterized in~\cite{rob2}.
The cycle structure of Dickson permutation polynomials $D_n(x,a)$
where $a\in\{0,\pm1\}$ has been studied in~\cite{mullen}. The
cycle structure of M\"{o}bius transformation has been described
in~\cite{turkey}. In this paper, we provide an exact formula for the
inverse of every R\'edei function and study the cycle structure
of R\'edei functions. More precisely, R\'edei functions
with a cycle of length $j$ are characterized. Then we extend this
to all cycles of the same length $j$ or $1$. An exact formula
for the number of cycles of length $j$ is given. These
are other contributions of this study.

Skolem sequences have been introduced and
studied extensively~\cite{cathybaker,surveySS}. These sequences
have applications in constructing cyclic Steiner triple systems
and in constructing codes resistant to random
interference~\cite{handbook,skolemapplication}. In this work we
continue to find applications for these nice structured sequences.
Specially, we use these sequences to produce self-inverse
and non-self-inverse interleavers.

This paper is organized as follows: For making the paper self-contained,
background on interleavers and on permutation functions are given in
Section~\ref{BasicDefinitionsandBackground}. The general structure of
our deterministic interleavers is explained and investigated in
this section.
Monomial, Dickson, M\"{o}bius, R\'edei and Skolem interleavers are
studied in Section~\ref{NewInterleaversFq}.
The cycle structure of R\'edei functions as well as the number of
cycles of certain length are also given in this section.
Conclusions and further work are commented in Section~\ref{conclusion}.

\section{Background}~\label{BasicDefinitionsandBackground}

\subsection{Basic definition of interleavers}~\label{interleavers}

Let us first give the general concept of an interleaver.
An interleaver
$\Pi$ may be interpreted as a function which permutes the indices
of components of ${\bf u}$. In other words, let
$I=\{0,1,\ldots, N-1\}$ be all indices of a vector ${\bf u}=(u_0,\ldots,u_{N-1})$, then the
interleaver $\Pi$ can be considered as a bijective function
of $I$. The inverse function $\Pi^{-1}$ is also necessary for
decoding process when we implement a \emph{deinterleaver}. An
interleaver $\Pi$ is called \emph{self-inverse} if $\Pi=\Pi^{-1}$.
\medskip

\subsection{Some well-known permutation functions over finite fields}~\label{PP}

Let $q=p^m$ and $\mathbb{F}_q$ be the
finite field of order $q$ where $p$ is a prime number.
A \emph{permutation function} over
$\mathbb{F}_q$ is a bijection that sends the elements
of $\mathbb{F}_q$ onto itself. It is clear that permutation functions
have a functional inverse with respect to composition. Thus,
for a permutation function $P\in\mathbb{F}_q[x]$, there exists
a unique $P^{-1}\in\mathbb{F}_q[x]$ of degree less than $q$
such that $P(P^{-1}(x))=P^{-1}(P(x))=x\pmod{x^q-x}$ for all
$x\in\mathbb{F}_q$. A permutation function $P$ is called
\emph{self-inverse} if $P=P^{-1}$.

Let $f$ be a primitive polynomial of degree $m$ over
$\mathbb{F}_p$ and assume that $\alpha$ is a root of $f$.
Since $q=p^m$ and $f$ divides $x^{q-1}-1$, we can represent
$\mathbb{F}_q$ as
\begin{equation}~\label{FF}
\mathbb{F}_q=\{0,\alpha^1,\ldots,\alpha^{q-2},\alpha^{q-1}\},
\end{equation}
where $\alpha^{q-1}=1$. The above representation (power
representation) of $\mathbb{F}_q$ is appropriate for
operations like multiplication and raising to a power.
Furthermore, for every $\alpha^i$, $0\leq i\leq q-1$,
there exists a polynomial representation (in $\alpha$)
with degree less than $m$ which is adequate for addition
and subtraction.

Next, we review four well-known permutation functions on
the finite field $\mathbb{F}_{q}$. They are useful for
constructing new deterministic interleavers.
\begin{itemize}
\item{Monomials~\cite{lidl}:} $M(x)=x^n$ for some $n\in \mathbb{N}$
is a permutation polynomial over $\mathbb{F}_q$ if and only if $\gcd(n,q-1)=1$ where $\gcd$ denotes the greatest common
divisor. The inverse of $M(x)$ is obviously the monomial
$M^{-1}(x)=x^m$ where $nm\equiv1\pmod{q-1}$.
\item{Dickson polynomials of the 1st kind~\cite{lidl}:}
Let $n$ be an integer. A function $D_n$ which satisfies
$D_n(x+y,xy)=x^n+y^n$ for all $x,y\in\mathbb{F}_q$
is called a Dickson
polynomial over $\mathbb{F}_q$. Let us fix $y=a$ for an element
$a\in\mathbb{F}_q$. The function $D_n(x,a)$ is a permutation polynomial
if and only if $\gcd(n,q^2-1)=1$.
For $a\in\{0,\pm 1\}$, the inverse of $D_n(x,a)$ is
$D_m(x,a)$ where $nm\equiv1\pmod{q^2-1}$.
It is easy to check~\cite{lidl} that
\begin{equation}~\label{DickPP}
    D_n(x,a)=\sum_{i=0}^{\lfloor n/2\rfloor}
        \frac{n}{n-i} \binom{n-i}{i}(-a)^px^{n-2i}.
\end{equation}
\item{M\"{o}bius transformation:} The function
\begin{equation}~\label{MobPP}
T(x)=\left\{
\begin{array}{ll}
 \frac{ax+b}{cx+d}& x\neq \frac{-d}{c},\\
\frac{a}{c}& x=\frac{-d}{c},
\end{array}
\right.
\end{equation}
where $a,b,c,d\in\mathbb{F}_q$, $c\neq0$ and $ad-bc\neq0$ is a
permutation function. Its inverse is simply
\begin{equation}~\label{MobinversePP}
T^{-1}(x)=\left\{
\begin{array}{ll}
 \frac{dx-b}{-cx+a}& x\neq \frac{a}{c},\\
\frac{-d}{c}& x=\frac{a}{c}.
\end{array}
\right.
\end{equation}
\item{R\'edei functions~\cite{Redei}:}
Let $\mbox{char}(\mathbb{F}_q)\neq2$ and $a\in \mathbb{F}_{q}^\ast$
be a non-square element; then $\sqrt{a}\in\mathbb{F}_{q^2}^\ast$.
The numerator $G_n(x,a)$ and the denominator $H_n(x,a)$
of the R\'edei function are
polynomials in $\mathbb{F}_q[x]$ satisfying the equation
$$(x+\sqrt{a})^n=G_n(x,a)+H_n(x,a)\sqrt{a}.$$
Also, it is easy to see that
$$(x-\sqrt{a})^n=G_n(x,a)-H_n(x,a)\sqrt{a}.$$
The R\'edei function $\displaystyle R_n=\frac{G_n}{H_n}$ with degree $n$
is a rational function over $\mathbb{F}_q$. We have that $R_n$
is a permutation function if and only if $\gcd(n,q+1)=1$.
In addition, if
$\mbox{char}(\mathbb{F}_q)\neq2$ and $a\in \mathbb{F}_{q}^\ast$
is a square element, then $R_n$ is a permutation function if and only if
$(n,q-1)=1$.

\end{itemize}
\medskip
\subsection{Skolem sequences}~\label{SS}

Let $D$ be a set of integers. A \emph{Skolem-type sequence} is a sequence with alphabet $D$
where each element $i\in D$ appears exactly twice in the sequence at positions
$a_i$ and $a_i+i=b_i$. Thus, $|b_i-a_i|=i$ for every $i\in D$. These sequences might have empty
positions, which we fill with zeros. For more information about Skolem sequences, we refer
the reader to~\cite{surveySS}. Here we use Skolem sequences to
construct self-inverse interleavers of a specific size.
A partition of the set $[n]=\{1,\ldots,n\}$ into $n$ ordered pairs
$\{(a_i,b_i)\colon b_i-a_i=i,~1\leq i\leq n\}$,
implies a \emph{Skolem sequence of order $n$}. It is obvious that in order to
generate Skolem sequence corresponding to this partition we must put integer
$i\in[n]$ in positions $a_i$ and $b_i$ of a sequence $S=(s_1,\ldots,s_{2n})$.
A \emph{$k$-extended Skolem sequence}  of order $n$ is a Skolem sequence of order $n$
which contains exactly one hole in position $k$. If $k$ is in the penultimate position,
the sequence is called a \emph{hooked sequence}.
A \emph{$(j,n)$-generalized Skolem sequence}  of multiplicity $j$
is a sequence $S=(s_1,\ldots,s_t)$ of integers from $[n]$
such that for every $i\in[n]$ there are exactly $j$ positions in the sequence $S$,
let us say $r_1,r_2=r_1+i,\ldots,r_j=r_1+(j-1)i$, such that $s_{r_1}=s_{r_2}=\cdots=s_{r_j}=i$.
It is easy to see that $t=jn$ in this case.

\begin{ex}
The sequence $(4,1,1,3,4,2,3,2)$ is a Skolem sequence of order $4$. The sequence
$(2,5,2,6,1,1,5,3,4,6,3,0,4)$
is a hooked Skolem sequence of order $6$.
\end{ex}

In the following we mention two theorems from~\cite{surveySS}
which state necessary and sufficient conditions for
the existence of the above defined Skolem sequences.

\begin{theorem}\label{th:SSkextendedSShookedSS}
A Skolem sequence of order $n$ exists if and only if $n\equiv0,1\pmod{4}$.
A hooked Skolem sequence of order $n$ exists if and only if $n\equiv2,3\pmod{4}$.
A $k$-extended Skolem sequence of order $n$ exists if and only if $n\equiv0,1\pmod{4}$,
when $k$ is odd and $n\equiv2,3\pmod{4}$ when $k$ is even.
\end{theorem}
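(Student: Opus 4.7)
The plan is to prove each of the three biconditionals by treating necessity and sufficiency separately. The necessity half in all three cases comes from a single counting identity: the positions occupied by the pairs $(a_i,b_i)$ partition a prescribed set of integers, and summing them in two different ways produces a congruence constraint on $n$ modulo $4$.

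For an ordinary Skolem sequence of order $n$, the pairs $(a_i,b_i)$ with $b_i-a_i=i$ partition $\{1,2,\ldots,2n\}$, so $\sum_{i=1}^n(a_i+b_i)=n(2n+1)$ and $\sum_{i=1}^n(b_i-a_i)=n(n+1)/2$. Eliminating the $a_i$'s gives $4\sum_{i=1}^n b_i=n(5n+3)$, hence $n(5n+3)\equiv 0\pmod{4}$ is necessary; a check of the four residue classes shows this is equivalent to $n\equiv 0,1\pmod 4$. For a hooked sequence (length $2n+1$ with hole at position $2n$) the analogous manipulation yields $(n+1)(n+2)\equiv 0\pmod 4$, equivalent to $n\equiv 2,3\pmod 4$. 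For a $k$-extended sequence with hole at arbitrary position $k\in\{1,\ldots,2n+1\}$ one obtains $(n+1)(n+2)\equiv 2k\pmod 4$, which collapses to the Skolem condition when $k$ is odd and to the hooked condition when $k$ is even.

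The sufficiency direction is the main obstacle: for every admissible $n$ (and every admissible position $k$, in the extended case) one must exhibit an actual valid partition. I would follow Skolem's classical approach of giving closed-form piecewise formulas for the pairs $(a_i,b_i)$, split by the residue of $i$ modulo a small integer and by the residue of $n$ modulo $4$. Verifying such a construction reduces to checking that $b_i-a_i=i$ and that the union of the positions covers the target slot set exactly once; both are mechanical case analyses. For the $k$-extended case a reduction via the reversal symmetry $s_j\mapsto s_{2n+2-j}$ (which sends a $k$-extended sequence to a $(2n+2-k)$-extended one, preserving the parity of $k$) lets one construct the sequences for only half of the admissible positions directly and obtain the rest by symmetry. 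Finding constructions that work uniformly across all admissible $(n,k)$ is where the real combinatorial difficulty lies.
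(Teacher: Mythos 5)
The paper itself offers no proof of this statement: Theorem~\ref{th:SSkextendedSShookedSS} is quoted verbatim from the survey \cite{surveySS} as a known classical result (Skolem for the first part, O'Keefe for hooked sequences, Abrham--Kotzig and Baker for the $k$-extended case), with only the remark that sufficiency ``is usually proved by giving direct constructions.'' So there is no in-paper argument to compare against, and your proposal has to be judged on its own. Your necessity half is correct and complete: the double count $\sum_i(a_i+b_i)=\sum_{j\in P}j$ over the occupied position set $P$ combined with $\sum_i(b_i-a_i)=n(n+1)/2$ does give $4\sum_i b_i=n(5n+3)$ in the ordinary case, $(n+1)(n+2)\equiv 0\pmod 4$ in the hooked case, and $(n+1)(n+2)\equiv 2k\pmod 4$ in the $k$-extended case, and the residue checks match the stated congruence classes. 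The reversal observation $s_j\mapsto s_{2n+2-j}$, sending a hole at $k$ to a hole at $2n+2-k$ of the same parity, is also correct.

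The genuine gap is the sufficiency direction, which you describe but do not carry out. Saying you ``would follow Skolem's classical approach of giving closed-form piecewise formulas for the pairs $(a_i,b_i)$'' is a statement of intent, not a proof: the entire mathematical content of the hard direction lies in exhibiting those formulas and verifying that the resulting pairs tile the position set, and this must be done separately for $n\equiv 0$ and $n\equiv 1\pmod 4$ (ordinary), for $n\equiv 2$ and $n\equiv 3\pmod 4$ (hooked), and --- much more delicately --- for every admissible pair $(n,k)$ in the extended case, where the reversal symmetry only halves the range of $k$ and small values of $n$ typically need ad hoc treatment. As written, the proposal establishes only the ``only if'' halves of the three equivalences; to close it you would either need to write out and check the explicit constructions or, as the paper does, cite them from the literature.
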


\begin{theorem}\label{th:generalizedSS}
 Let $j=p^et$, where $p$ is the smallest prime factor of $j$, and $e,t$ are positive integers.
Then a $(j,n)$-generalized Skolem sequence exists if and only if $n\equiv0,1,\ldots,p-1\pmod{p^{e+1}}$.
\end{theorem}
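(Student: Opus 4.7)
The plan is to establish necessity by a counting argument modulo successive powers of $p$, and sufficiency by an explicit construction building on Theorem~\ref{th:SSkextendedSShookedSS}. Write $j = p^e t$ with $\gcd(t,p) = 1$ and $e \ge 1$. For necessity, given a $(j,n)$-generalized Skolem sequence $(s_1, \ldots, s_{jn})$, I would first count, for each residue $s \in \{0, 1, \ldots, p-1\}$, the positions $k \in \{1, \ldots, jn\}$ with $k \equiv s \pmod{p}$ in two ways. On one hand this count equals $jn/p$, since $p \mid j$. On the other hand, decomposing by the value at position $k$: an AP for a value $i$ with $p \nmid i$ hits each residue class evenly, contributing $j/p$ to each, while an AP for a value $i$ with $p \mid i$ lies entirely in the single class $r_i \bmod p$. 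Equating the two counts forces
\[
|\{\, i : p \mid i,\ 1 \le i \le n,\ r_i \equiv s \pmod{p} \,\}| \;=\; \lfloor n/p \rfloor / p
\]
to be a nonnegative integer, whence $p \mid \lfloor n/p \rfloor$ and thus $n \bmod p^2 < p$.

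To strengthen this to $n \bmod p^{e+1} < p$, I would iterate the balancing argument at the level of residues modulo $p^k$ for $k = 2, \ldots, e+1$, stratifying indices by the $p$-adic valuation $v_p(i)$. The key distributional fact is that an AP of common difference $i$ with $v_p(i) = a$ and length $j = p^e t$ is uniformly distributed among the residue classes modulo $p^{\min(k,\,a+1)}$ whenever $p \mid j/p^{a}$. Equating position counts across the $p$ classes at each new level produces an integrality condition on $\lfloor n/p^k \rfloor$ that upgrades the constraint from $p^k$ to $p^{k+1}$, and after $e$ iterations yields the claimed condition $n \bmod p^{e+1} \in \{0, 1, \ldots, p-1\}$.

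For sufficiency, for each admissible residue class of $n$ modulo $p^{e+1}$ I would exhibit an explicit $(j,n)$-generalized Skolem sequence. The plan is to lift the base Skolem sequences guaranteed by Theorem~\ref{th:SSkextendedSShookedSS}, replacing each pair $(a_i, b_i)$ by a $j$-term arithmetic progression of common difference $i$, and then splicing in short extension blocks that absorb $p^{e+1}$ new values at a time as $n$ grows within a fixed admissible class. The main obstacle is this sufficiency direction: the explicit construction requires meticulous combinatorial casework for every admissible residue class modulo $p^{e+1}$ and is the technical core of the argument carried out in the survey~\cite{surveySS} that we cite.
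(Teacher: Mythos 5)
The paper itself contains no proof of this statement: Theorem~\ref{th:generalizedSS} is quoted from the survey~\cite{surveySS} (``we mention two theorems from~\cite{surveySS}''), so there is nothing in the paper to compare your argument against, and your proposal must stand on its own. It does not. The decisive gap is the sufficiency direction: you describe a plan (lift a base Skolem sequence by replacing each pair with a $j$-term arithmetic progression, then splice in extension blocks) and then explicitly defer the ``meticulous combinatorial casework'' to the very reference being cited. That is a citation, not a proof; the explicit constructions \emph{are} the content of the ``if'' direction, and you would at least need to carry one residue class through completely and verify that the splicing keeps the $j$ occurrences of every symbol $i$ in an arithmetic progression of common difference exactly $i$ while still tiling $\{1,\ldots,jn\}$.

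The necessity half is closer to a proof but still has holes. Your level-one count is correct: it gives $\lvert\{i : p\mid i,\ r_i\equiv s\}\rvert = \lfloor n/p\rfloor/p$, hence $p\mid\lfloor n/p\rfloor$ and $n\bmod p^2<p$. The iteration to modulus $p^{e+1}$, however, is only asserted. Already for $j=4$, $p=2$, the mod-$4$ count gives $4C_s=\lfloor n/2\rfloor-2B_{s\bmod 2}$, where $C_s$ counts symbols divisible by $4$ starting in class $s$ and $B_c$ counts symbols $i$ with $v_2(i)=1$ starting in class $c$ modulo $2$; extracting $2\mid\lfloor n/4\rfloor$ requires combining the forced parities of $B_0,B_1$ with $\sum_s C_s=\lfloor n/4\rfloor$ and the level-one conclusion. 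This works, but it is not the one-line ``integrality condition on $\lfloor n/p^k\rfloor$'' you claim, and it needs to be written out. More seriously, nothing in your necessity argument uses that $p$ is the \emph{smallest} prime factor of $j$: run verbatim with any other prime $q$ dividing $j$, the same count forces $n\bmod q^2<q$ as well, a condition absent from the statement. Concretely, for $j=6$ and $n=4$ the mod-$3$ count already rules the sequence out even though $n\equiv0\pmod 4$. So either the statement as transcribed is incomplete for $j$ with several prime factors or the smallest-prime hypothesis must enter the argument somewhere; your sketch does neither, and a correct proof has to resolve this.
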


There is an efficient heuristic algorithm for obtaining a
Skolem sequence of large order based on hill-climbing~\cite{constructionSS}. The sufficiency of
the above existence theorems is usually proved by giving direct constructions of the required sequences.
New sequences can also be found by concatenating two or more existing sequences.

Several direct constructions of hooked extended Skolem sequences
are provided in~\cite{linek} in terms of unions or sums of two Skolem sequences. Pivoting
and doubling are another techniques for constructing Skolem sequences~\cite{surveySS}.
When $n>5$, we can use the explicit constructions given in~\cite{handbook}
to find our ordered pairs for Skolem and hooked Skolem sequences.

By means of various types of Skolem sequences we introduce
Skolem interleavers with known cycle structure.
Also we can provide
interleavers with cycles of length $j$ or $1$ by using
generalized Skolem sequences.

\section{New algebraic interleavers over $\mathbb{F}_q$}~\label{NewInterleaversFq}

\subsection{Permutation functions as deterministic interleavers}~\label{GeneralStructureofNewDeterministicInterleavers}
As before, let $\alpha$ be a primitive element, root of a primitive polynomial $f$
over $\mathbb{F}_p$ of degree $m$, and $q=p^m$. If $P$ is a
permutation function over
$\mathbb{F}_q=\{0,\alpha^1,\ldots,\alpha^{q-2},\alpha^{q-1}\}$,
then for every $1\leq i\leq q-1$ there exists a $j$
such that $P(\alpha^i)=\alpha^j$, $1\leq j\leq q-1$.
Thus, $P^{-1}(\alpha^j)=\alpha^i$.
In this manner, we can take this permutation function $P$
as a function which rearranges the powers of $\alpha$. Therefore,
every permutation function over the finite field $\mathbb{F}_q$
can induce an interleaver as follows:

\begin{definition}\label{PPinteleaver}
Let $\alpha$ be a root of a primitive polynomial $f$ over
$\mathbb{F}_p$ of degree $m$, and $q=p^m$. Let $P$ be a
permutation function over $\mathbb{F}_q$. An interleaver
$\Pi_P:\mathbb{Z}_{q}\rightarrow\mathbb{Z}_{q}$ is defined by
\begin{equation}\label{PPint}
\Pi_P(i)=\ln (P(\alpha^i))
\end{equation}
where $\ln(.)$ denotes the discrete logarithm to the base $\alpha$
over $\mathbb{F}_q^\ast$ and $\ln(0)=0$.
\end{definition}

It is easy to see that every permutation function $P$ has a
unique compositional inverse $P^{-1}$. Clearly, $P^{-1}$ is also
a permutation function over $\mathbb{F}_q$.
We can also define
the interleaver $\Pi_{P^{-1}}$ by means of $P^{-1}$.
Based on the above discussions the following statments can be described~\cite{allertonversion}.
There is a one-to-one correspondence between the set of all permutation functions
over a fixed finite field $\mathbb{F}_q$ and the set of all
interleavers of size $q$.
One of the straightforward consequences of the above facts is that
for a self-inverse permutation function $P$ over $\mathbb{F}_q$, we have $\Pi_P=(\Pi_P)^{-1}$.

We now proceed to introduce interleavers based
on permutation functions over finite fields based on the above discussions.
The following general definition works for all of them.
\begin{definition}\label{monint}
Let $f$ be one of the permutation functions over $\mathbb{F}_q$
that were cited in Section~\ref{BasicDefinitionsandBackground}.
Then $\Pi_f$ as defined in~(\ref{PPint})
is a new determinstic interleaver. Each of them can be
explicitly named by their underlying permutation function.
\end{definition}
For example, we have {\em monomial, Dickson, M\"{o}bius, R\'edei and Skolem interleavers}.
The classes of permutation functions with explicit inverse formulas
are cited in Section~\ref{BasicDefinitionsandBackground}.
Each of them induce a new determinstic interleaver based on the above definition.

\subsection{Monomial interleavers}\label{MonInterleavers}

Let $M(x)=x^n$ over $\mathbb{F}_q$ and $\gcd(n,q-1)=1$. Then
$$\Pi_M(i)=\ln(M(\alpha^i))=\ln((\alpha^i)^n)=ni\!\!\!\!\pmod{q-1},$$
for $i\in\mathbb{Z}_{q}$. So, $\Pi_M(x)=nx\pmod{q-1}$ for
$x\in\mathbb{Z}_{q}$. Since $\gcd(n,q-1)=1$, $\Pi_M$ is a
linear permutation polynomial.
\begin{ex}\label{ex:monint}
Assume that $n=11$ and $q=13$. Since $\gcd(11,12)=1$, the
monomial $M(x)=x^{11}$ is a permutation polynomial
over $\mathbb{F}_{13}$.
Furthermore, $11.11\equiv1\pmod{12}$ and this means that $M^{-1}=M$
is a permutation polynomial over $\mathbb{F}_{13}$ and $\Pi_{M^{-1}}=(\Pi_{M})^{-1}$.
Therefore, $\Pi_{M^{-1}}=\Pi_M$ can act as the deinterleaver too.
Since $2$ is a primitive element of $\mathbb{F}_{13}$, we get
$$\begin{array}{llll}
M(2^1)=2^{11},&M(2^2)=2^{10},&M(2^3)=2^{9},&M(2^4)=2^{8},\\
M(2^7)=2^{5},&M(2^6)=2^{6},&M(2^5)=2^{7},&M(2^8)=2^4,\\
M(2^9)=2^{3},&M(2^{10})=2^{2},&M(2^{11})=2^{1},&M(2^{12})=2^{12}.
\end{array}$$
We can interpret this interleaver and its deinterleaver using
$$\left(\begin{array}{lllllllllllll}
0&1&2&3&4&5&6&7&8&9&10&11&12\\
0&11&10&9&8&7&6&5&4&3&2&1&12
\end{array}\right).$$
Also we observe that the only three fixed points are $0$,
$2^6=-1\pmod{13}$ and $2^{12}=1\pmod{13}$; this is a general
fact as it will be seen in Corollary~\ref{cor:monPPsamelength}.
\end{ex}
Our approach
leads us to use self-inverse permutation functions which
have cycles of the same length $j=2$, or otherwise fixed
points. Such permutation monomials are obtained using
the following theorem from~\cite{rob2} for $j=2$. We recall
that $j=\mbox{ord}_s(n)$, if $j$ is the smallest integer
with the property $n^j\equiv1\pmod{s}$.
\begin{theorem}\label{th:monPPsamelength}
Let $q-1=p_0^{k_0}p_1^{k_1}\ldots p_r^{k_r}$. The permutation
monomial $M(x)=x^n$ of $\mathbb{F}_q$ has only cycles of the
same length $j$ or $1$ (fixed points) if and only if one of
the following conditions holds for each $0\leq \ell\leq r$:
\begin{itemize}
\item $n\equiv1\pmod{p_\ell^{k_\ell}}$,
\item $j=\mbox{ord}_{p_\ell^{k_\ell}}(n)$ and $j|p_\ell-1$,
\item $j=\mbox{ord}_{p_\ell^{k_\ell}}(n)$, $k_\ell\geq2$ and $j=p_\ell$.
\end{itemize}
\end{theorem}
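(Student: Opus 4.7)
My plan is to translate the cycle structure of $M(x)=x^n$ on $\mathbb{F}_q$ into a purely arithmetic question about multiplication by $n$ on $\mathbb{Z}/(q-1)\mathbb{Z}$, and then use the Chinese Remainder Theorem to reduce to a single prime power. Since $\alpha$ generates $\mathbb{F}_q^*$, the identity $M(\alpha^i)=\alpha^{ni\bmod(q-1)}$ conjugates $M$ on $\mathbb{F}_q^*$ to multiplication by $n$ on $\mathbb{Z}/(q-1)\mathbb{Z}$; the element $0\in\mathbb{F}_q$ is always a fixed point. For an index $i$ with $\gcd(i,q-1)=d$, the orbit length under this multiplication equals $\mathrm{ord}_{(q-1)/d}(n)$. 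Hence the set of cycle lengths of $M$ is exactly $\{\mathrm{ord}_m(n):m\mid q-1\}$, and the hypothesis ``only cycles of length $j$ or $1$'' is equivalent to requiring $\mathrm{ord}_m(n)\in\{1,j\}$ for every divisor $m$ of $q-1$.

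Writing $m=\prod_\ell p_\ell^{s_\ell}$ and using $\mathrm{ord}_m(n)=\mathrm{lcm}_\ell\mathrm{ord}_{p_\ell^{s_\ell}}(n)$, specialising $m$ to a prime power forces $\mathrm{ord}_{p_\ell^s}(n)\in\{1,j\}$ for every $\ell$ and every $0\leq s\leq k_\ell$; conversely, an lcm of elements of $\{1,j\}$ is again in $\{1,j\}$, so the per-prime conditions are also sufficient. This reduces the theorem to a one-prime claim: for a fixed prime power $p^k$, the orders $\mathrm{ord}_{p^s}(n)$, $0\leq s\leq k$, all lie in $\{1,j\}$ if and only if one of the three listed alternatives holds for this prime.

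For the single-prime analysis with $p$ odd, I would use the standard lifting-the-exponent identity: if $e=\mathrm{ord}_p(n)$ and $v=v_p(n^e-1)$, then $\mathrm{ord}_{p^s}(n)=e\cdot p^{\max(s-v,0)}$. If $\mathrm{ord}_{p^k}(n)=1$ then $n\equiv 1\pmod{p^k}$, giving (i). Otherwise $\mathrm{ord}_{p^k}(n)=j>1$ and I split on $e$. When $e>1$, requiring $e\in\{1,j\}$ forces $e=j$; combined with $e\mid p-1$ this yields $j\mid p-1$, and stability $\mathrm{ord}_{p^s}(n)=j$ throughout $s\leq k$ is equivalent to $n^j\equiv 1\pmod{p^k}$, recovering (ii). When $e=1$, the orders $p^{\max(s-v,0)}$ traverse $1,p,p^2,\ldots,p^{k-v}$; for all to lie in $\{1,j\}$ one needs $k-v=1$ and $j=p$, which is exactly (iii) (and forces $k\geq 2$). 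The converse direction in each case is a direct substitution.

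The main obstacle is the case $p=2$ with $k\geq 3$, where $(\mathbb{Z}/2^k\mathbb{Z})^*\cong\mathbb{Z}/2\oplus\mathbb{Z}/2^{k-2}$ fails to be cyclic, so the clean formula for $\mathrm{ord}_{p^s}(n)$ must be re-derived using the canonical decomposition $n\equiv\pm 5^a\pmod{2^k}$. Here (ii) degenerates (it would require $j\mid 1$) while (iii) captures the only non-trivial possibility $j=2$. The bookkeeping in this case is more intricate since the two cyclic factors can contribute different orders that must lcm to a value in $\{1,j\}$, but the same trichotomy emerges after analysing how those factors interact.
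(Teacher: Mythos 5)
Your plan is essentially the route the paper takes. The paper itself imports Theorem~\ref{th:monPPsamelength} from~\cite{rob2} without reproving it, but it proves the verbatim analogue for R\'edei functions (Theorem~\ref{th:factors}) by exactly your strategy: first a ``divisor criterion'' (all cycles of length $j$ or $1$ if and only if, for every divisor $s$ of the relevant modulus, either $n\equiv1\pmod{s}$ or $j=\mbox{ord}_s(n)$ --- which is your identification of the multiset of cycle lengths with $\{\mbox{ord}_m(n):m\mid q-1\}$ via the conjugation to multiplication by $n$ on $\mathbb{Z}/(q-1)\mathbb{Z}$), then a CRT/lcm reduction to prime powers (the paper's Lemmas~\ref{lem:4} and~\ref{lem:5}), and finally a per-prime-power analysis of how $\mbox{ord}_{p^s}(n)$ grows with $s$ (the paper's Lemmas~\ref{lem:1}--\ref{lem:3} and Proposition~\ref{prop:1}). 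Your lifting-the-exponent identity $\mbox{ord}_{p^s}(n)=e\cdot p^{\max(s-v,0)}$ is a compact repackaging of those lemmas and yields the same trichotomy for odd $p$. The one point to repair: the $p=2$ obstacle you flag is illusory, and you should not leave it as an unfinished ``same trichotomy emerges'' claim. You do not need the full order formula (which indeed fails for $p=2$), nor the decomposition $n\equiv\pm5^a\pmod{2^k}$; all that is required is that $\mbox{ord}_{p^{s+1}}(n)$ equals either $\mbox{ord}_{p^s}(n)$ or $p\cdot\mbox{ord}_{p^s}(n)$ --- the paper's Lemma~\ref{lem:2}, valid for every prime including $2$. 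Since $\mbox{ord}_2(n)=1$ for odd $n$, the chain $\mbox{ord}_{2^s}(n)$, $s=1,\ldots,k$, starts at $1$ and climbs by factors of $2$; if it ends at $2^t$ with $t\geq2$ it must pass through the intermediate value $2\notin\{1,2^t\}$, so the only admissible nontrivial case is $j=\mbox{ord}_{2^k}(n)=2$ with $k\geq2$, which is precisely your alternative (iii). With that half-page added, your argument is complete and matches the paper's.
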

Since we concentrate on the case $j=2$, the following
corollary is useful for us.
\begin{corollary}\label{cor:monPPsamelength}
Let $q-1=p_0^{k_0}p_1^{k_1}\ldots p_r^{k_r}$ where $p_0=2$.
The permutation polynomial of $\mathbb{F}_q$ given by
$M(x)=x^n$ decomposes in cycles of the same length $j$ and
$\{0,1,-1\}$ are the only fixed elements if and only if
\begin{itemize}
\item{for $k_0>2$:} $j=2$ and $n=q-2$ or $\displaystyle n=\frac{q-3}{2}$.
\item{for $k_0=2$:} $j=2$ and $n=q-2$.
\end{itemize}
\end{corollary}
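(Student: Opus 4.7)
My plan is to apply Theorem~\ref{th:monPPsamelength} with the extra hypothesis that the fixed set of $M$ is exactly $\{0,1,-1\}$. The non-zero fixed points of $M$ on $\mathbb{F}_q^{\ast}$ form the group of $(n-1)$-th roots of unity, which has order $\gcd(n-1,q-1)$, so the fixed-set hypothesis is equivalent to $\gcd(n-1,q-1)=2$. This single arithmetic condition will drive the rest of the argument.

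First I would use this $\gcd$ condition to eliminate most of the possibilities opened by Theorem~\ref{th:monPPsamelength}. At the prime $p_0=2$, condition~(i) of the theorem ($n\equiv 1\pmod{2^{k_0}}$) would force $v_2(n-1)\ge k_0\ge 2$, contradicting $v_2(n-1)=1$; condition~(ii) requires $j\mid p_0-1=1$; so only condition~(iii) remains, which pins down $j=p_0=2$ and $\mbox{ord}_{2^{k_0}}(n)=2$. At each odd prime $p_\ell\mid q-1$, condition~(iii) would force $j=p_\ell$ (impossible since $j=2$), condition~(i) contradicts the $\gcd$ constraint, and condition~(ii) must hold; since $\pm 1$ are the only square roots of unity modulo $p_\ell^{k_\ell}$, this gives $n\equiv -1\pmod{p_\ell^{k_\ell}}$.

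Next I would classify the possible residues of $n$ modulo $2^{k_0}$. I must solve $n^2\equiv 1\pmod{2^{k_0}}$ subject to $v_2(n-1)=1$. For $k_0=2$ the only square roots of $1$ modulo $4$ are $\pm 1$, and only $n\equiv -1\pmod 4$ satisfies $v_2(n-1)=1$. For $k_0\ge 3$ the group of square roots of unity modulo $2^{k_0}$ is $\{\pm 1,\,2^{k_0-1}\pm 1\}$, and a short $v_2$ computation singles out the two residues $n\equiv-1$ and $n\equiv 2^{k_0-1}-1\pmod{2^{k_0}}$ (the residue $2^{k_0-1}+1$ is ruled out because $v_2(n-1)=k_0-1\ge 2$).

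Finally I would reassemble the residues by the Chinese Remainder Theorem. Writing $q-1=2^{k_0}s$ with $s$ odd, the uniform choice $n\equiv -1$ at every factor gives $n\equiv -1\pmod{q-1}$, i.e., $n=q-2$. The only additional option, available when $k_0\ge 3$, combines $n\equiv 2^{k_0-1}-1\pmod{2^{k_0}}$ with $n\equiv -1\pmod s$; I would verify directly that $(q-3)/2=2^{k_0-1}s-1$ meets both congruences, using that $s$ odd gives $2^{k_0-1}s\equiv 2^{k_0-1}\pmod{2^{k_0}}$. Running the argument in reverse yields the converse implications. I expect the main obstacle to be the classification over $2^{k_0}$ for $k_0\ge 3$: it is there that the extra solution $(q-3)/2$ emerges, and care is needed in tracking the $2$-adic valuation of $n-1$ among the four square roots of unity.
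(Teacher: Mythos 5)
Your proposal is correct and follows exactly the route the paper intends: the corollary is stated there without proof as a specialization of Theorem~\ref{th:monPPsamelength}, and your argument supplies the missing details. The reduction of the fixed-point condition to $\gcd(n-1,q-1)=2$, the case elimination at each prime power (only condition~(iii) survives at $p_0=2$, forcing $j=2$, and only condition~(ii) with $n\equiv-1$ survives at odd primes), and the CRT reassembly --- including the extra residue $2^{k_0-1}-1$ modulo $2^{k_0}$ that yields $n=(q-3)/2$ precisely when $k_0\geq 3$ --- are all sound.
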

We note that if we use $n=q-2$, we get some sort of symmetry
in our permutation as in Example~\ref{ex:monint}. Let $n=q-2$,
then we have $\Pi_M(x)$ equals to
$$\ln{\left(M\left(\alpha^x\right)\right)}=
  \ln{\left(\alpha^{x(q-2)}\right)}=x(q-2)\!\!\!\!\pmod{q-1}.$$
It is easy to see that for every $i\in\mathbb{Z}_q$ we have
$\Pi_{M}(i)=q-1-i$ and $\Pi_{M}(q-1-i)=i$ because
$\Pi_M(i)=i(q-2)=i(q-1-1)=i(q-1)-i=-i=q-1-i\pmod{q-1}$
and since the permutation is self-inverse, $\Pi_M(q-1-i)=i$.
However, this is not the case for $\displaystyle n=\frac{q-3}{2}$
in Corollary~\ref{cor:monPPsamelength}. In this situation
$\displaystyle \Pi_M(x)=x\left(\frac{q-3}{2}\right)\pmod{q-1}$.
We do not see that symmetry in this case and it seems that
these self-inverse monomial interleavers perform better than
self-inverse monomial interleavers with $n=q-2$. The authors
of~\cite{cor} have constructed and investigated the efficiency of
self-inverse monomial interleavers with $n=q-2$.

\subsection{Dickson interleavers}\label{DicksonInterleavers}

Let $\gcd(n,q^2-1)=1$. It is known~\cite{lidl} that $D_n(x,a)$
for $a\in\{0,\pm1\}$ is a permutation polynomial over
$\mathbb{F}_q$ and has the compositional inverse $D_m(x,a)$
where $nm\equiv1\pmod{q^2-1}$. We can define a set of
deterministic interleavers
$\displaystyle \Pi_D^{(n,a)}:\mathbb{Z}_q\rightarrow\mathbb{Z}_q$
by $\displaystyle \Pi_D^{(n,a)}(i)=\ln(D_n(\alpha^i,a))$.
\begin{ex}\label{ex:Dicksonint}
Let $n=19$, $q=11$ and $a=1$. Then we get
$$D_{19}(x,1)=x^{9}+3x^{7}+9x^5+5x^3+5x\!\!\!\!\pmod{11}.$$
Since $\gcd(19,(11)^2-1)=\gcd(19,120)=1$ and $a=1$, $D_{19}(x,1)$
is a permutation polynomial over $\mathbb{F}_{11}$ with
compositional inverse $D_m(x,1)$ where $19m\equiv 1\pmod{120}$.
Therefore, $m=19$, and this means that $D_{19}(x,1)$ is a
self-inverse Dickson permutation polynomial over
$\mathbb{F}_{11}$. A Dickson interleaver
$\Pi_D^{(19,1)}:\mathbb{Z}_{11}\rightarrow\mathbb{Z}_{11}$
can be defined by $\Pi_D^{(19,1)}(i)=\ln(D_{19}(2^i,1))$
where $2\in\mathbb{F}_{11}$ is a primitive element.
Thus, we have the following:
$$\begin{array}{llll}
D_{19}(0,1)=0,&D_{19}(2,1)=2,&D_{19}(2^2,1)=2^2,\\
D_{19}(2^3,1)=2^3,&D_{19}(2^4,1)=2^9,&D_{19}(2^5,1)=2^5,\\
D_{19}(2^6,1)=2^6,&D_{19}(2^7,1)=2^7,&D_{19}(2^8,1)=2^8,\\
D_{19}(2^9,1)=2^4,&D_{19}(2^{10},1)=2^{10}.&
\end{array}$$
This means that $\Pi_D^{(19,1)}$ permutes the elements
of $\mathbb{Z}_{11}$ as follow
$$\left(\begin{array}{ccccccccccc}
0&1&2&3&4&5&6&7&8&9&10\\
0&1&2&3&9&5&6&7&8&4&10
\end{array}\right).$$
\end{ex}
The following two theorems are from~\cite{rob1}. We use
$j=\mbox{ord}^-_s(n)$ for the least integer with
$n^j\equiv-1\pmod{s}$.
\begin{theorem}\label{th:DickPPsamelength1}
Let $q-1=p_1^{k_1}\ldots p_r^{k_r}$ and
$q+1=p_{r+1}^{k_{r+1}}\ldots p_s^{k_s}$ be the
prime factorization of $q-1$ and $q+1$, respectively.
Suppose that $\gcd(n,q^2-1)=1$.
The Dickson permutation polynomial $D_n(x,1)$ over
$\mathbb{F}_q$ is the identity on $\mathbb{F}_q$
or all the non-trivial cycles have length two if and only if one of the following holds for all
$1\leq \ell\leq r$, and one of the following conditions
holds for all $r+1\leq \ell\leq s$:
\begin{enumerate}
\item{Either}
\begin{enumerate}
\item $n\equiv1\pmod{p_\ell^{k_\ell}}$ and $p_\ell^{k_\ell}=2$, or
\item $2=\mbox{ord}_{p_\ell^{k_\ell}}^{-}(n)$, and $4|(p_\ell-1)$.
\end{enumerate}
\item{Either}
\begin{enumerate}
\item $n\equiv\pm1\pmod{p_\ell^{k_\ell}}$, or
\item $2=\mbox{ord}_{p_\ell^{k_\ell}}(n)$, $p_\ell=2$, $k_\ell\geq2$, and $n\not\equiv -1\pmod{p_\ell^{k_\ell}}$.
\end{enumerate}
\end{enumerate}
\end{theorem}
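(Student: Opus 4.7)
The plan is to exploit the functional identity $D_n(y+y^{-1},1) = y^n + y^{-n}$ to convert the cycle analysis into a question about the power map $y\mapsto y^n$ on two cyclic groups. Every $x\in\mathbb{F}_q$ may be written as $x=y+y^{-1}$ for some $y$ lying either in $\mathbb{F}_q^*$ (when $x^2-4$ is a square in $\mathbb{F}_q$, including $x=\pm 2$) or in the norm--one subgroup $N=\{z\in\mathbb{F}_{q^2}^* : z^{q+1}=1\}$ of order $q+1$ (otherwise), with the only ambiguity being $y\leftrightarrow y^{-1}$. This gives a bijection between $\mathbb{F}_q$ and the quotient of $\mathbb{F}_q^*\cup N$ by inversion.

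First I would establish the cycle--length formula: the orbit of $x=y+y^{-1}$ under $D_n$ has length equal to the smallest $j\geq 1$ with $n^j\equiv\pm 1\pmod{\mathrm{ord}(y)}$, since $D_n^{\,j}(x)=x$ iff $y^{n^j}\in\{y,y^{-1}\}$. Consequently the property ``every non--trivial orbit has length two'' becomes: for every $d$ dividing $q-1$ or $q+1$, $n^2\equiv\pm 1\pmod d$. Since this condition is inherited by divisors, it suffices to impose it for the maximal prime powers $d=p_\ell^{k_\ell}$; by CRT we may treat each $\ell$ separately, those with $\ell\leq r$ producing the conditions in part (1) and those with $\ell\geq r+1$ producing the conditions in part (2).

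The core of the argument is then a prime--power analysis of the set $\{n:n^2\equiv\pm 1\pmod{p^k}\}$. For odd $p$ the group $(\mathbb{Z}/p^k\mathbb{Z})^*$ is cyclic of order $p^{k-1}(p-1)$, so $n^2\equiv 1$ forces $n\equiv\pm 1$, while $n^2\equiv -1$ is solvable precisely when $4\mid p-1$, in which case the non--trivial solutions are exactly those with $\mathrm{ord}^-_{p^k}(n)=2$; this is the content of option (1)(b), and similarly feeds option (2)(a). For $p=2$ and $k\geq 3$ one uses the decomposition $(\mathbb{Z}/2^k\mathbb{Z})^*\cong\langle -1\rangle\times\langle 5\rangle$: here $n^2\equiv -1$ has no solution while $n^2\equiv 1$ has the four roots $\pm 1,\,2^{k-1}\pm 1$, and the two ``new'' roots are exactly those with $\mathrm{ord}_{2^k}(n)=2$ and $n\not\equiv -1$, matching option (2)(b). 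The small cases $k=1,2$ are dispatched by direct inspection and account for the restrictive clause $p_\ell^{k_\ell}=2$ in option (1)(a).

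The main obstacle, I expect, is the book--keeping needed to reconcile the superficially different shapes of parts (1) and (2) and, within each, to verify that the listed options (a) and (b) together exhaust the set of valid residues modulo $p_\ell^{k_\ell}$ without overlap: the underlying algebraic condition $n^2\equiv\pm 1\pmod{p_\ell^{k_\ell}}$ is the same on both sides, but the theorem's prescription distributes the $+1$ and $-1$ branches asymmetrically depending on whether $p_\ell^{k_\ell}$ divides $q-1$ or $q+1$. The delicate cases are precisely those involving $p_\ell=2$, where one must distinguish $k_\ell=1,2$ (where $\pm 1$ may collapse) from $k_\ell\geq 3$ (where genuinely new square roots of $1$ appear), and check that the assumption $\gcd(n,q^2-1)=1$ remains compatible with each branch.
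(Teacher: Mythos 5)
You should first be aware that the paper itself offers no proof of Theorem~\ref{th:DickPPsamelength1}: it is quoted from~\cite{rob1} with the single sentence ``The following two theorems are from [rob1],'' so there is no in-paper argument to compare yours against. That said, your general framework is the right one and is the one used in that source: the identity $D_n(y+y^{-1},1)=y^n+y^{-n}$, the parametrization of $\mathbb{F}_q$ by the pairs $\{y,y^{-1}\}$ with $y$ in $\mathbb{F}_q^\ast$ or in the norm-one subgroup of order $q+1$, and the resulting cycle-length formula (the orbit of $x=y+y^{-1}$ has length equal to the least $j$ with $n^j\equiv\pm1\pmod{\mbox{ord}(y)}$) are all correct, as is your prime-power analysis of the square roots of $\pm1$ modulo $p^k$.

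The genuine gap is the sentence ``Since this condition is inherited by divisors, it suffices to impose it for the maximal prime powers $d=p_\ell^{k_\ell}$; by CRT we may treat each $\ell$ separately.'' Inheritance by divisors lets you replace ``all $d$ dividing $q-1$ or $q+1$'' by the two moduli $q-1$ and $q+1$ themselves; it does \emph{not} let you pass to the prime-power components independently, because the condition $n^2\equiv\pm1\pmod{d}$ carries a sign that must be the same for every prime power dividing $q-1$ (and, separately, the same for every prime power dividing $q+1$). If the signs disagree, a composite divisor produces a longer cycle. Concretely, take $q=31$ and $n=7$ (so $\gcd(7,960)=1$): then $7^2\equiv1\pmod{3}$ and $7^2\equiv-1\pmod{5}$, so every maximal prime power of $q-1=30$ passes your test, yet $7^2\equiv4\pmod{15}$, and an element $y\in\mathbb{F}_{31}^\ast$ of order $15$ yields for $D_7(x,1)$ a cycle of length $4$ (the least $j$ with $7^j\equiv\pm1\pmod{15}$ is $4$). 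This global sign constraint is precisely what the asymmetric distribution of the $+1$ and $-1$ branches between parts (1) and (2) of the theorem is encoding; it is the mathematical heart of the statement rather than the ``book-keeping'' you defer to the end, and without it your argument establishes only a strictly weaker --- and false --- equivalence. To repair the proof you must first show that the involution property is equivalent to $n^2\equiv\varepsilon_1\pmod{q-1}$ and $n^2\equiv\varepsilon_2\pmod{q+1}$ for fixed signs $\varepsilon_1,\varepsilon_2\in\{+1,-1\}$, and only then descend to prime powers within each fixed-sign branch; at that point you would also discover that the statement as transcribed here handles only one choice of the pair $(\varepsilon_1,\varepsilon_2)$, which is worth checking against the original formulation in~\cite{rob1}.
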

\begin{theorem}\label{th:DickPPsamelength-1}
Let $q-1=p_1^{k_1}\ldots p_r^{k_r}$
and $q+1=p_{r+1}^{k_{r+1}}\ldots p_s^{k_s}$
be the prime factorization of $q-1$ and $q+1$, respectively.
Suppose that $\gcd(n,q^2-1)=1$.
The Dickson permutation polynomial $D_n(x,-1)$ over
$\mathbb{F}_q$ is the identity on $\mathbb{F}_q$
or all the non-trivial cycles have length two if and only if
one of the following conditions holds for all $1\leq \ell\leq r$,
and one of the following conditions holds for all $r+1\leq \ell\leq s$
\begin{enumerate}
\item{Either}
\begin{enumerate}
\item $2(n+1)\equiv0\pmod{p_\ell^{k_\ell}}$ and $p_\ell^{k_\ell}=2,4$, or
\item $2=\mbox{ord}_{p_\ell^{k_\ell}}^{-}(n)$, and $4|(p_\ell-1)$
\end{enumerate}
\item{Either}
\begin{enumerate}
\item $2(n+1)\equiv0\pmod{p_\ell^{k_\ell}}$, or
\item $n\equiv1\pmod{p_\ell^{k_\ell}}$, or
\item $2=\mbox{ord}_{p_\ell^{k_\ell}}(n)$, $k_\ell\geq2$ and $p_\ell=2$.
\end{enumerate}
\end{enumerate}
\end{theorem}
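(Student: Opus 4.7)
The plan is to parametrize $\mathbb{F}_q$ via the $2$-to-$1$ map $y\mapsto y-y^{-1}$ from a subset of $\mathbb{F}_{q^2}^\ast$, translating the cycle condition on $D_n(\cdot,-1)$ into a purely multiplicative statement that can be analyzed one prime at a time. Since $\gcd(n,q^2-1)=1$ and $q^2-1$ is even, $n$ must be odd, so the Dickson identity reads $D_n(y-y^{-1},-1)=y^n-y^{-n}$. Writing out $(y-y^{-1})^q=y-y^{-1}$ and factoring gives that $y-y^{-1}\in\mathbb{F}_q$ iff $(y^{q-1}-1)(y^{q+1}+1)=0$, so the preimage of $\mathbb{F}_q$ is
\begin{equation*}
   S=\{y\in\mathbb{F}_{q^2}^\ast:y^{q-1}=1\}\cup\{y\in\mathbb{F}_{q^2}^\ast:y^{q+1}=-1\}.
\end{equation*}

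The condition ``identity or all non-trivial cycles of length two'' is equivalent to $D_{n^2}(x,-1)=x$ for every $x\in\mathbb{F}_q$, which by the Dickson identity becomes $y^{n^2}-y^{-n^2}=y-y^{-1}$ for every $y\in S$. Multiplying by $y^{n^2}$ and collecting terms produces the clean factorization
\begin{equation*}
   (y^{n^2-1}-1)(y^{n^2+1}+1)=0,
\end{equation*}
so the requirement becomes: for every $y\in S$, either $y^{n^2-1}=1$ or $y^{n^2+1}=-1$. I would then split $S$ into its two pieces and, using CRT applied to $q-1=\prod_{\ell\le r}p_\ell^{k_\ell}$ and to $q+1=\prod_{\ell>r}p_\ell^{k_\ell}$, check the disjunction separately on each $p_\ell$-primary subgroup. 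The $y^{n^2-1}=1$ branch contributes conditions of the form $n\equiv\pm 1\pmod{p_\ell^{k_\ell}}$, while the $y^{n^2+1}=-1$ branch contributes $\mathrm{ord}^{-}_{p_\ell^{k_\ell}}(n)=2$ (requiring $4\mid p_\ell-1$ on the $q-1$ side so that $-1$ is a square in the relevant cyclic subgroup) and, at $p_\ell=2$, the characteristic condition $2(n+1)\equiv 0\pmod{p_\ell^{k_\ell}}$ coming from the second component of $S$.

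The main obstacle will be the $2$-adic bookkeeping. For odd $n$ one has $v_2(n^2-1)\ge 3$ but $v_2(n^2+1)=1$, and this asymmetry, combined with the fact that every $y\in\{y:y^{q+1}=-1\}$ has $v_2(\mathrm{ord}(y))=v_2(q+1)+1$, is exactly what forces the condition to read $2(n+1)\equiv 0$ (rather than the cleaner $n\equiv -1$) at $p_\ell=2$ in~(2a), forces $p_\ell^{k_\ell}\in\{2,4\}$ in~(1a), and produces the supplementary $k_\ell\ge 2$ hypothesis in~(2c). Once these $p_\ell=2$ cases are disentangled, the odd-prime cases follow from a uniform cyclic-group computation entirely parallel to the one underlying Theorem~\ref{th:DickPPsamelength1}; the only real novelty relative to the $a=1$ case is the shift from $y^{q+1}=1$ to $y^{q+1}=-1$ in the second component of $S$, which is what injects the extra factor of $2$ throughout condition~(2).
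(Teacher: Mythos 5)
First, a remark on the comparison you asked for: the paper does not prove this theorem at all --- it is quoted from reference \cite{rob1} and used as a black box --- so there is no in-paper proof to measure your argument against, and I can only assess the proposal on its own terms. Your setup is correct and is indeed the standard route: for odd $n$ (forced by $\gcd(n,q^2-1)=1$ when $q$ is odd) one has $D_n(y-y^{-1},-1)=y^n-y^{-n}$; the preimage of $\mathbb{F}_q$ under $y\mapsto y-y^{-1}$ is exactly $S=\{y:y^{q-1}=1\}\cup\{y:y^{q+1}=-1\}$; the involution property is equivalent to $D_{n^2}(x,-1)=x$ for all $x$ (via $D_n(D_n(x,-1),(-1)^n)=D_{n^2}(x,-1)$ and $n$ odd); and this translates into: for every $y\in S$, $y^{n^2-1}=1$ or $y^{n^2+1}=-1$. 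All of that checks out, as do your $2$-adic observations.

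The genuine gap is the sentence ``check the disjunction separately on each $p_\ell$-primary subgroup.'' A condition of the form ``for all $y$ in a cyclic group $G$, $y^A=1$ or $y^B=c$'' does \emph{not} localize prime by prime: an element whose order is divisible by several primes could a priori need one branch at one prime and the other branch at another, and then satisfies neither globally. The correct intermediate step is a covering argument: $\{y:y^A=1\}$ is a subgroup and $\{y:y^B=c\}$ is (empty or) a coset of a subgroup, and a finite cyclic group can be the union of such a pair only if one of them is everything or both have index two with the coset equal to the complement; this forces a \emph{single} branch to hold uniformly on all of $\mu_{q-1}$, and separately a single branch on $\{y:y^{q+1}=-1\}$, and only after that uniformization may each branch be split by CRT into per-prime-power congruences. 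Skipping this step yields a strictly weaker per-prime criterion, and the discrepancy is not academic: take $q=11$ and $n=47$. Then $\gcd(47,120)=1$, condition (1) of the statement holds for both prime powers of $q-1=2\cdot 5$ (via (1a) at $2$ and (1b) at $5$, since $47^2\equiv -1\pmod 5$ and $4\mid 4$), and condition (2) holds for both prime powers of $q+1=4\cdot 3$ (via (2a)); yet with the primitive root $2$ of $\mathbb{F}_{11}$ and $x=2-2^{-1}=7$ one computes $D_{47}(7,-1)=2^{7}-2^{-7}=10$ and $D_{47}(10,-1)=2^{9}-2^{-9}=4\neq 7$, so $D_{47}(x,-1)$ contains a $4$-cycle. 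In other words, the per-prime disjunction you propose to verify is satisfied here while the conclusion fails, which shows both that your localization step cannot be repaired as stated and that the theorem as transcribed in this paper does not match what a correct execution of your (otherwise sound) strategy produces; you would need to carry out the uniformization argument and work from the formulation in \cite{rob1}.
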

If $D_n(x,1)$ meets the assumptions of
Theorem~\ref{th:DickPPsamelength1}, then all the cycles of
$D_n(x,1)$ have the same length $1$ or $2$. This means that
$D_n(x,1)$ is a self-inverse permutation polynomial, that is,
$D_n(x,1)=D_m(x,1)$ where $mn\equiv1\pmod{q^2-1}$.
Hence, we conclude that $\Pi_{D}^{(n,1)}$
is a self-inverse interleaver. A similar statement can be given
for $D_n(x,-1)$ using Theorem~\ref{th:DickPPsamelength-1}.

Furthermore, by using the above theorems and other results
in~\cite{rob1} we can derive Dickson polynomials which produce
permutations with prescribed cycle structure. Hence these
polynomials can be employed to construct both self-inverse
and non-self-inverse interleavers. If $a=0$ then these Dickson
polynomials turn out to be monomials. Thus, their corresponding
Dickson interleavers can be considered as a generalization of
monomial interleavers.

\subsection{M\"obius interleavers}\label{MobiusInterleavers}

In the following we establish and introduce M\"{o}bius
interleavers. This is the first usage of these non-linear
transformations to construct interleavers.
In order to generate M\"{o}bius interleavers with prescribed
cycle arrangement, one should know the cycle structure of
these functions. This is reported from~\cite{turkey} and
used here to provide new deterministic interleavers. Since
self-inverse M\"{o}bius interleavers need only three defining
parameters $a=d$, $b$ and $c$, they have a simple and efficient
structure that is easy to implement.
One can derive the inverse function of $T$ using~(\ref{MobinversePP}).
We have $T=T^{-1}$ if and only if $a=d$, $-b=b$
and $c=-c$. Let $q=2^n$, we get $-b=b$
and $c=-c$. Therefore, for a self-inverse M\"{o}bius function we have
\begin{equation}\label{eq:invMobiuschar2}
T(x)=T^{-1}(x)=\left\{
\begin{array}{ll}
 \frac{ax+b}{cx+a}& x\neq \frac{a}{c},\\
\frac{a}{c}& x=\frac{a}{c},
\end{array}
\right.
\end{equation}
where $a^2-bc\neq0$ and $c\neq 0$.
A detail example is provided in~\cite{allertonversion}.

We cite next theorem from~\cite{turkey}. This fully describes the cycle structure of
$T$ in terms of the eigenvalues of the coefficient matrix $A_T$
associated to $T$
\begin{equation}\label{eq:matrixcoeffMobius}
A_T=\left(\begin{array}{cc}
a&b\\c&d
\end{array}\right).
\end{equation}
The characteristic polynomial $t$ of $A_T$ is a
quadratic polynomial.
\begin{theorem}\label{th:cyclestructureMobius}
Let $T$ be the permutation defined by~(\ref{MobPP}), and let
$t$ be the characteristic polynomial of the matrix $A_T$
associated to $T$. Let $\alpha_1,\alpha_2\in\mathbb{F}_{q^2}$
be the roots of $t$.
\begin{enumerate}
\item{} Suppose $t(x)$ is irreducible. If
$\displaystyle k=\mbox{ord}\left(\frac{\alpha_1}{\alpha_2}\right)
=\frac{q+1}{s}$, $1\leq s<\frac{q+1}{2}$, then $T$ has $s-1$
cycles of length $k$ and one cycle of length $k-1$. In particular
$T$ is a full cycle if $s=1$.
\item{} Suppose $t(x)$ is reducible and
$\alpha_1,\alpha_2\in\mathbb{F}_q^\ast$ are roots of $t(x)$
and $\alpha_1 \neq \alpha_2$. If
$\displaystyle k=\mbox{ord}\left(\frac{\alpha_1}{\alpha_2}\right)
=\frac{q-1}{s}$, $s\geq1$, then $T$ has $s-1$ cycles of length
$k$, one cycle of length $k-1$ and two cycles of length $1$.
\item{}Suppose $t(x)=(x-\alpha_1)^2$, $\alpha_1\in\mathbb{F}_q^\ast$
where $q=p^n$. Then $T$ has $p^{n-1}-1$ cycles of length $p$,
one cycle of length $p-1$ and one cycle of length $1$.
\end{enumerate}
\end{theorem}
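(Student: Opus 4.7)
The plan is to view $T$ as the restriction to $\mathbb{F}_q$ of a permutation $\widetilde{T}$ of the projective line $\mathbb{P}^1(\mathbb{F}_q)=\mathbb{F}_q\cup\{\infty\}$, defined by $\widetilde{T}(\infty)=a/c$ and $\widetilde{T}(-d/c)=\infty$. Because $c\neq 0$, the point $\infty$ is not fixed by $\widetilde{T}$, so the cycle structure of $T$ is obtained from that of $\widetilde{T}$ on $\mathbb{P}^1(\mathbb{F}_q)$ by deleting $\infty$ from its $\widetilde{T}$-orbit: that single orbit is shortened by one and every other orbit is preserved verbatim. This reduces the theorem to computing the orbits of $\widetilde{T}$ on the $q+1$ points of $\mathbb{P}^1(\mathbb{F}_q)$.

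The next step is to conjugate $\widetilde{T}$ inside $\mathrm{PGL}_2$ to a canonical form dictated by $t(x)$. In case (2) the two distinct eigenlines of $A_T$ are defined over $\mathbb{F}_q$, so a suitable $g\in\mathrm{PGL}_2(\mathbb{F}_q)$ sends them to $0$ and $\infty$; then $g\widetilde{T}g^{-1}$ is the map $S(z)=\beta z$ with $\beta=\alpha_1/\alpha_2$, which has the two fixed points $0,\infty$ and $(q-1)/k=s$ orbits of length $k=\mbox{ord}(\beta)$ on $\mathbb{F}_q^\ast$. In case (3) the unique eigenline gives a $g\in\mathrm{PGL}_2(\mathbb{F}_q)$ sending the single fixed point to $\infty$, and a Jordan-form computation shows $g\widetilde{T}g^{-1}$ must be a nontrivial translation $z\mapsto z+\delta$, which has the lone fixed point $\infty$ and $p^{n-1}$ orbits of length $p=\mbox{char}(\mathbb{F}_q)$ on $\mathbb{F}_q$. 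Transporting these pictures back by the $\mathbb{F}_q$-rational bijection $g^{-1}$ and then deleting $\infty$ yields the stated cycle structures in cases (2) and (3).

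Case (1) is the genuinely delicate step because $\alpha_1,\alpha_2\in\mathbb{F}_{q^2}\setminus\mathbb{F}_q$, so any diagonalizing $g$ lives in $\mathrm{PGL}_2(\mathbb{F}_{q^2})$ rather than over the base field. I would sidestep the non-$\mathbb{F}_q$-rational conjugation by counting fixed points of iterates. Since $\alpha_2=\alpha_1^q$, the ratio $\beta=\alpha_1^{1-q}$ has norm one, so $\mbox{ord}(\beta)$ divides $q+1$; write $k=(q+1)/s$. For every $m\geq 1$ the eigenvalues of $A_T^m$ are $\alpha_1^m,\alpha_2^m$, and these coincide precisely when $\beta^m=1$, that is, when $k\mid m$; otherwise they are a pair of Galois conjugates outside $\mathbb{F}_q$, so $A_T^m$ has no $\mathbb{F}_q$-rational eigenvector and $\widetilde{T}^m$ fixes no point of $\mathbb{P}^1(\mathbb{F}_q)$. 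Hence $\widetilde{T}$ has order exactly $k$ in $\mathrm{PGL}_2(\mathbb{F}_q)$ and every orbit on $\mathbb{P}^1(\mathbb{F}_q)$ has length exactly $k$, giving $(q+1)/k=s$ orbits; deleting $\infty$ then produces $s-1$ cycles of length $k$ and one cycle of length $k-1$.

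The main obstacle, as anticipated, is the irreducible case: transferring the cycle information from the diagonal form $z\mapsto\beta z$ (defined only over $\mathbb{F}_{q^2}$) back to $\mathbb{P}^1(\mathbb{F}_q)$ without an $\mathbb{F}_q$-rational diagonalization. The fixed-point-of-iterates argument above is what unlocks the plan; the remaining bookkeeping, namely verifying that exactly the orbit through $\infty$ is shortened when descending from $\mathbb{P}^1(\mathbb{F}_q)$ to $\mathbb{F}_q$, is routine given $c\neq 0$.
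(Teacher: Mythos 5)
Your proposal cannot be checked against an in-paper argument because the paper gives none: Theorem~\ref{th:cyclestructureMobius} is quoted from the reference \cite{turkey} and used as a black box, with no proof supplied. Judged on its own, your proof is correct, and it follows what is essentially the standard (and the cited source's) route: extend $T$ to the permutation $\widetilde{T}$ of $\mathbb{P}^1(\mathbb{F}_q)$ induced by $A_T$, observe that since $c\neq0$ the point $\infty$ is not fixed and that the convention $T(-d/c)=a/c$ exactly splices $\infty$ out of its $\widetilde{T}$-cycle (shortening that one cycle by one and leaving every other cycle intact), and then read off the cycle structure of $\widetilde{T}$ from the conjugacy class of $A_T$ in $\mathrm{PGL}_2$. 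Your handling of the three cases is sound: in case (2) the $\mathbb{F}_q$-rational diagonalization reduces $\widetilde{T}$ to multiplication by $\beta=\alpha_1/\alpha_2$ on $\mathbb{F}_q^\ast$, whose orbits are the cosets of $\langle\beta\rangle$, giving $(q-1)/k=s$ cycles of length $k$ plus the two fixed points $0,\infty$; in case (3) non-diagonalizability (forced by $c\neq0$) yields a nontrivial translation with $p^{n-1}$ orbits of length $p$ and the single fixed point $\infty$; and in case (1) your fixed-point-of-iterates argument correctly circumvents the absence of an $\mathbb{F}_q$-rational diagonalization --- the eigenvalues of $A_T^m$ are $\alpha_1^m$ and $(\alpha_1^m)^q$, which are either equal, in which case $A_T^m$ is scalar because $A_T$ is diagonalizable over $\mathbb{F}_{q^2}$, or form a conjugate pair outside $\mathbb{F}_q$, in which case $\widetilde{T}^m$ is fixed-point-free on $\mathbb{P}^1(\mathbb{F}_q)$; hence every orbit has length exactly $k=\mbox{ord}(\alpha_1/\alpha_2)$, which divides $q+1$ by the norm computation. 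The only point worth spelling out in a final write-up is the one you flagged as routine: after conjugating by $g$, the point to be deleted is $g(\infty)$, and it is guaranteed not to be a fixed point of the normal form precisely because $c\neq0$, so it always lies inside one of the long cycles, which is what makes the counts $s-1$, $k-1$ (rather than $s$, $k$) come out as stated.
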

It is obvious that we are interested again on permutations
with cycles of length $1$ and $2$ only. Hence based on the above theorem we
have the following theorem:
\begin{theorem}\label{th:cyclestructureMobiusnew}
Let $\Pi_T$ be an interleaver defined by $T$, and let
$t$, $A_T$, $\alpha_1$ and $\alpha_2$ be as in Theorem~\ref{th:cyclestructureMobius}.
Then $\Pi_T$ is a self-inverse interleaver if $\tr(A_T)=0$.
\end{theorem}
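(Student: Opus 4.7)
The plan is to establish the algebraic identity $T=T^{-1}$ on $\mathbb{F}_q$ whenever $\tr(A_T)=a+d=0$, and then invoke the one-to-one correspondence between permutation functions over $\mathbb{F}_q$ and interleavers of size $q$ (noted immediately after Definition \ref{PPinteleaver}) together with the identity $\Pi_{P^{-1}}=(\Pi_P)^{-1}$ recorded there, to conclude $\Pi_T=(\Pi_T)^{-1}$. Thus the real content is the involution property of $T$.

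To carry this out, I would substitute $d=-a$ into the explicit inverse formula (\ref{MobinversePP}). On the generic branch $x\neq a/c$, this yields
$$T^{-1}(x)=\frac{dx-b}{-cx+a}=\frac{-ax-b}{-cx+a}=\frac{ax+b}{cx-a}=\frac{ax+b}{cx+d},$$
where the last step reuses $-a=d$; this matches $T(x)$ on its generic branch, because $-d/c=a/c$ is now the \emph{common} exceptional point of $T$ and $T^{-1}$. At that exception point, the piecewise definitions give $T(-d/c)=a/c$ and $T^{-1}(a/c)=-d/c=a/c$, which agree. Hence $T=T^{-1}$ as functions on $\mathbb{F}_q$. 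The nondegeneracy requirement $ad-bc\neq 0$ becomes $-a^2-bc\neq 0$ after the substitution, which remains nonvacuous since $c\neq 0$, so the family of such M\"obius maps is genuinely nonempty in all characteristics (in characteristic $2$ the hypothesis simply reads $a=d$ and the same chain of equalities applies, recovering the formula (\ref{eq:invMobiuschar2})).

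No serious obstacle is anticipated; the only care point is matching the exceptional branches, as above. For independent confirmation one can instead read the conclusion off Theorem \ref{th:cyclestructureMobius}: when $\tr(A_T)=0$ the characteristic polynomial is $t(x)=x^2+\det(A_T)$, whose roots satisfy $\alpha_1/\alpha_2=-1$, so the order parameter $k$ in parts (1) and (2) equals $2$, forcing every cycle of $T$ to have length $1$ or $2$; part (3) is excluded because a repeated root would require $\alpha_1=0$, contradicting $\alpha_1\in\mathbb{F}_q^\ast$. Either route immediately yields that $\Pi_T$ is self-inverse, completing the argument.
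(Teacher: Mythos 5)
Your proof is correct, but your primary route is genuinely different from the paper's. The paper proves the theorem by running through the three cases of Theorem~\ref{th:cyclestructureMobius}: in each case it shows that $\tr(A_T)=0$ forces $k=\mbox{ord}(\alpha_1/\alpha_2)=2$ (or, in the repeated-root case with $p=2$, forces $a=d$), so that every cycle of $T$ has length $1$ or $2$ and $T$ is an involution; as a by-product it records the exact number of transpositions and fixed points in each case. You instead verify $T=T^{-1}$ directly by substituting $d=-a$ into~(\ref{MobinversePP}) and matching the exceptional branches, which is shorter, self-contained (no appeal to the cited cycle-structure result), works uniformly in all characteristics, and makes visible why $a+d=0$ is essentially also necessary: the matrix of $T^{-1}$ is the adjugate of $A_T$, and proportionality with $c\neq 0$ forces the scalar to be $-1$. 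What you give up is the explicit cycle count that the paper's argument yields for free. One caution about your ``independent confirmation'': the claim that case (3) of Theorem~\ref{th:cyclestructureMobius} is excluded is false in characteristic $2$, where $x^2+\det(A_T)$ is always a perfect square with a nonzero repeated root, so case (3) is in fact the only case that occurs there; the conclusion still holds because case (3) with $p=2$ produces only cycles of length $1$ and $2$, which is exactly how the paper treats it. Since that route is only a cross-check, your main argument stands as written.
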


\begin{proof}
The three cases of the previous theorem have the following consequences:
\begin{enumerate}
\item{}We should have
$\displaystyle 2=k=\mbox{ord}\left(\frac{\alpha_1}{\alpha_2}\right)$.
But $k=2$ if and only if $(\alpha_1)^2=(\alpha_2)^2$ and $\alpha_1\neq\alpha_2$
if and only if $\alpha_1=-\alpha_2$ if and only if $a+d=\tr(A_T)=\alpha_1+\alpha_2=0$.
Hence, $t$ is irreducible and $\tr(A_T)=0$ if and only if $k=2$ and we have
$\frac{q+1}{2}-1$ cycles of length two and one cycle of length one.
\item{}In a similar situation with case 1) we have that $t$ is
reducible and $\tr(A_T)=0$ if and only if $k=2$ and we have
$\displaystyle \frac{q-1}{2}-1$
cycles of length two and three cycles of length one.
\item{}In the third case we have only cycles of length $1$ and $2$
when $p=2$. As we mentioned in~(\ref{eq:invMobiuschar2}) in this
case we also have $\tr(A_T)=a+d=0$. So $a=d$ and this means that
$\alpha_1=a$. Thus, $a=d$ if and only if $T$ has $2^{n-1}-1$ cycles of length
$2$ and two cycles of length $1$ where $q=2^n$.
\end{enumerate}
\end{proof}

\subsection{R\'edei interleavers}\label{RedeiInterleavers}

In this section the concept of R\'edei interleavers is introduced.
Again, obtaining the cycle structure of R\'edei functions is
essential. In the following, the arrangement of cycles of R\'edei
functions is described. More precisely, R\'edei functions with
all cycles of length $j\neq1$ are derived. We also provide a
condition under which all cycles of a R\'edei function are of
length $j$ or $1$.
The inverse for every R\'edei function $R_n$ is provided next.
The following relations are cited from~\cite{carlitz} for all $x,y\in\mathbb{F}_q$
\begin{eqnarray}
   &&R_n(R_m)(x,a)=R_{nm}(x,a),\label{eq:compositRedei}\\
   &&R_n(x,a)=x \Longleftrightarrow n
       \equiv1\!\!\!\!\pmod{q+1},\label{eq:idRedei}\\
   &&R_n\left(\frac{xy+a}{x+y},a\right)=
       \frac{R_n(x,a)R_n(y,a)+a}{R_n(x,a)+R_n(y,a)}\label{eq:propertyRedei}.
\end{eqnarray}
The next lemma is proved in~\cite{carlitz}.
\begin{lemma}\label{th:generalproperty}
Let $r$ be a rational function with coefficients in $\mathbb{F}_q$
that satisfies
\begin{equation}\label{eq:generalproperty}
 r\left(\frac{xy+a}{x+y}\right)=\frac{r(x)r(y)+a}{r(x)+r(y)}
\end{equation}
where $a$ is a fixed element of $\mathbb{F}_q$ and $x$ and
$y$ are two unknowns. Then, if $a\neq 0$ and
$\mbox{char}(\mathbb{F}_q)\neq 2$, $r$ coincides with a
R\'edei's function for some $m$ (not necessarily relatively
prime to $q+1$).
\end{lemma}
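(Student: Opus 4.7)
The plan is to linearize the functional equation by a M\"obius change of variable so that it collapses to a purely multiplicative identity $s(uv)=s(u)s(v)$, and then to classify the rational solutions of that identity. Because $a\neq 0$ and $\mbox{char}(\mathbb{F}_q)\neq 2$, the element $\sqrt{a}\in\mathbb{F}_{q^2}^{\ast}$ exists and is nonzero, so the rational function
\begin{equation*}
\phi(x)=\frac{x+\sqrt{a}}{x-\sqrt{a}}, \qquad \phi^{-1}(u)=\sqrt{a}\cdot\frac{u+1}{u-1},
\end{equation*}
is a well-defined bijection on $\mathbb{P}^{1}(\mathbb{F}_{q^{2}})$. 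The key computational step, which I would carry out by clearing denominators, is the pair of conjugate identities
\begin{equation*}
\phi\!\left(\frac{xy+a}{x+y}\right)=\phi(x)\phi(y) \qquad\text{and}\qquad \phi\!\left(\frac{ww'+a}{w+w'}\right)=\phi(w)\phi(w'),
\end{equation*}
which together express that $\phi$ conjugates the binary operation $(x,y)\mapsto (xy+a)/(x+y)$ to ordinary multiplication in $\mathbb{F}_{q^{2}}^{\ast}$.

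Next, I would introduce the conjugate $s(u):=\phi\!\big(r(\phi^{-1}(u))\big)$, a rational function over $\mathbb{F}_{q^{2}}$. Substituting $x=\phi^{-1}(u)$ and $y=\phi^{-1}(v)$ into the hypothesis~\eqref{eq:generalproperty} and applying the two identities above, the functional equation for $r$ reduces to the multiplicative Cauchy equation
\begin{equation*}
s(uv)=s(u)\,s(v).
\end{equation*}
Thus the problem is transferred from the exotic operation $(xy+a)/(x+y)$ to the far more rigid setting of multiplicative rational functions.

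The next step is to show that the only nonzero rational solutions of $s(uv)=s(u)s(v)$ are monomials $s(u)=u^{m}$ for some $m\in\mathbb{Z}$. Setting $v=1$ gives $s(1)=1$. Writing $s=f/g$ in lowest terms, the equation translates into the polynomial identity $f(uv)\,g(u)g(v)=g(uv)\,f(u)f(v)$ in $\mathbb{F}_{q^{2}}[u,v]$. The divisor of zeros and poles of $s(uv)$ in $\mathbb{A}^{2}$ is supported on curves $uv=\alpha$, while that of $s(u)s(v)$ is supported on the lines $u=\alpha$ and $v=\alpha$; these loci can coincide only when every root or pole $\alpha$ of $s$ is $0$ or $\infty$. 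Hence $s(u)=c\,u^{m}$, and $s(1)=1$ forces $c=1$.

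Finally, from $(x\pm\sqrt{a})^{m}=G_{m}(x,a)\pm H_{m}(x,a)\sqrt{a}$ one sees immediately that $\phi(R_{m}(x,a))=\phi(x)^{m}$, i.e.\ $R_{m}(x,a)=\phi^{-1}(\phi(x)^{m})$. Therefore $s(u)=u^{m}$ translates back to $r(x)=R_{m}(x,a)$. Because $\phi$ carries $\mathbb{P}^{1}(\mathbb{F}_{q})$ onto the norm-one subgroup of $\mathbb{F}_{q^{2}}^{\ast}$, which has order $q+1$, the exponent $m$ may be reduced modulo $q+1$ and no coprimality with $q+1$ is imposed, matching the statement. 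The main obstacle I anticipate is the divisor classification step, since ruling out all roots and poles away from $\{0,\infty\}$ must be done carefully over the algebraic closure; a clean alternative is a direct degree and leading-coefficient comparison in $f(uv)g(u)g(v)=g(uv)f(u)f(v)$, bootstrapping from the specializations $v=1$ and $v=u$ that give $s(u^{2})=s(u)^{2}$.
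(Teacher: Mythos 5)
The paper offers no proof of this lemma to compare against: it is imported verbatim from Carlitz \cite{carlitz} with the remark ``the next lemma is proved in \cite{carlitz}.'' Judged on its own, your argument is correct and is essentially the classical one for this fact. The conjugation identity $\phi\bigl(\frac{xy+a}{x+y}\bigr)=\phi(x)\phi(y)$ checks out by clearing denominators (both sides equal $\frac{xy+\sqrt{a}(x+y)+a}{xy-\sqrt{a}(x+y)+a}$), the passage to $s(uv)=s(u)s(v)$ is legitimate because the hypothesis is an identity in two unknowns and so survives the substitution $x=\phi^{-1}(u)$, $y=\phi^{-1}(v)$, and the divisor step is sound: for $\alpha\neq0$ the curve $uv=\alpha$ is irreducible and is not a line, and coprimality of $f$ and $g$ prevents it from lying in the zero locus of $g(uv)$, so all zeros and poles of $s$ sit at $0$ and $\infty$ and $s=cu^{m}$ with $c=s(1)=1$. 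The identity $\phi(R_m(x,a))=\phi(x)^m$ is immediate from $(x\pm\sqrt{a})^m=G_m(x,a)\pm H_m(x,a)\sqrt{a}$, so $r=\phi^{-1}\circ(\cdot)^m\circ\phi=R_m$. Two small points you should tidy up, neither of which affects the substance: (i) the monomial classification tacitly assumes $s\not\equiv0$ and $s\not\equiv\infty$; these degenerate cases correspond to $r$ being the constant $\pm\sqrt{a}$, which is impossible when $a$ is a non-square in $\mathbb{F}_q$ since $r$ has coefficients in $\mathbb{F}_q$, but should be mentioned. (ii) Your closing remark that $\phi$ carries $\mathbb{P}^1(\mathbb{F}_q)$ onto the norm-one subgroup of order $q+1$, and hence that $m$ may be reduced modulo $q+1$, holds only when $a$ is a non-square (for square $a$ the relevant group has order $q-1$); this matches the convention used in Theorem~\ref{th:Redeiinverse}, where the lemma is applied, but is worth stating explicitly since the lemma itself assumes only $a\neq0$.
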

Comparing~(\ref{eq:generalproperty}) with~(\ref{eq:propertyRedei})
results in the next theorem which provides the inverse of every R\'edei function.
\begin{theorem}\label{th:Redeiinverse}
Let $R_n$ for some $n$ be a R\'edei function over $\mathbb{F}_q$
where $a\in\mathbb{F}_q^\ast$ is a non-square and $\gcd(n,q+1)=1$.
Then $R_n^{-1}=R_m$ for $m$ satisfying $nm\equiv1\pmod{q+1}$.
\end{theorem}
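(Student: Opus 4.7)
The plan is to deduce the theorem from Lemma~\ref{th:generalproperty} together with the two identities~(\ref{eq:compositRedei}) and~(\ref{eq:idRedei}), exactly as the phrase ``Comparing~(\ref{eq:generalproperty}) with~(\ref{eq:propertyRedei})'' suggests. The idea is: first identify the inverse rational function as some R\'edei function $R_m$, and only then pin down the index $m$.

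First I would let $r$ denote the compositional inverse of $R_n$, which exists as a rational function over $\mathbb{F}_q$ since $\gcd(n,q+1)=1$ guarantees that $R_n$ is a permutation function. I would then show that $r$ satisfies the functional equation~(\ref{eq:generalproperty}). To do this, start from~(\ref{eq:propertyRedei}) and substitute $x=r(u)$ and $y=r(v)$ for generic $u,v\in\mathbb{F}_q$; the right-hand side becomes $\tfrac{uv+a}{u+v}$ because $R_n(r(u))=u$ and $R_n(r(v))=v$, while the left-hand side is $R_n\bigl(\tfrac{r(u)r(v)+a}{r(u)+r(v)}\bigr)$. Applying $r$ to both sides yields precisely~(\ref{eq:generalproperty}) with $r$ in place of the unknown rational function.

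Once $r$ is known to satisfy~(\ref{eq:generalproperty}), Lemma~\ref{th:generalproperty} (using that $a\neq 0$ is non-square and $\mathrm{char}(\mathbb{F}_q)\neq 2$, which holds by the standing hypothesis on R\'edei functions) implies that $r=R_m$ for some integer $m$, not a priori coprime to $q+1$. At this point I would invoke the composition rule~(\ref{eq:compositRedei}): since $R_n\circ R_m=R_n\circ r$ is the identity on $\mathbb{F}_q$, we have $R_{nm}(x,a)=x$ for all $x\in\mathbb{F}_q$. By~(\ref{eq:idRedei}) this forces $nm\equiv 1\pmod{q+1}$, which is the desired congruence and simultaneously yields $\gcd(m,q+1)=1$ for free, so $R_m$ is indeed a permutation function.

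The main obstacle will be justifying the substitution $x=r(u)$, $y=r(v)$ carefully: the identity~(\ref{eq:propertyRedei}) in principle only holds for arguments at which the rational expressions are defined, so one should either argue that both sides are equal as rational functions (and therefore agree wherever both are defined on $\mathbb{F}_q$), or handle the finitely many exceptional values using the conventional extension of R\'edei functions to the projective line (assigning the value $\infty$ at poles and $R_n(\infty)=1$). Everything else is a direct chain of the cited identities.
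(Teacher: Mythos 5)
Your proposal is correct and follows essentially the same route as the paper's own proof: substitute into~(\ref{eq:propertyRedei}) to show that the compositional inverse satisfies the functional equation~(\ref{eq:generalproperty}), invoke Lemma~\ref{th:generalproperty} to identify it as some $R_m$, and then use~(\ref{eq:compositRedei}) and~(\ref{eq:idRedei}) to force $nm\equiv1\pmod{q+1}$. Your closing remark about handling the exceptional values of the rational expressions is a point of rigor the paper glosses over, but it does not change the argument.
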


\begin{proof}
First, it is clear that $R_n$ has a compositional inverse
$R_n^{-1}$. Using~(\ref{eq:propertyRedei}) and
taking $R_n^{-1}(.)$ on both sides, we get
$$ \frac{xy+a}{x+y}
 = R_n^{-1}\left(\frac{R_n(x,a)R_n(y,a)+a}{R_n(x,a)+R_n(y,a)}\right).$$
Let us assume that $s=R_n(x,a)$ and $t=R_n(y,a)$. Then
applying~(\ref{eq:generalproperty}) to $R^{-1}_n$, we get
$$ \frac{R_n^{-1}(R_n(x,a))R_n^{-1}(R_n(y,a))+a}
        {R_n^{-1}(R_n(x,a))+R_n^{-1}(R_n(y,a))}  
  = R_n^{-1}\left(\frac{R_n(x,a)R_n(y,a)+a}{R_n(x,a)+R_n(y,a)}\right),
$$
implying that, for all $s,t\in\mathbb{F}_q$, we get
$$ \frac{R_n^{-1}(s)R_n^{-1}(t)+a}{R_n^{-1}(s)+R_n^{-1}(t)}
 = R_n^{-1}\left(\frac{st+a}{s+t}\right).$$
Since $R_n^{-1}$ satisfies all the conditions of
Lemma~\ref{th:generalproperty}, $R_n^{-1}$ coincides with a
R\'edei function for some $m$. So, $R_n^{-1}=R_m$. Now, we
use~(\ref{eq:compositRedei}) and~(\ref{eq:idRedei}) to get
$$  id_{\mathbb{F}_q}=R_n(R_n^{-1})=R_n(R_m)=R_{nm} 
    \Longleftrightarrow nm\equiv1\!\!\!\!\!\pmod{q+1}. $$
\end{proof}

We note that $R_n^{-1}$ is a rational function because every
function from $\mathbb{F}_q$ to itself can be interpreted as
a polynomial with degree less than $q$.

An example is given in~\cite{allertonversion}.
\begin{theorem}\label{th:acyclejRedei}
Let $j$ be a positive integer. The R\'edei function
$R_n(x,a)$ of $\mathbb{F}_q$ with $\gcd(n,q+1)=1$
has a cycle of length $j$ if and only if
$q+1$ has a divisor $s$ such that $j=\mbox{ord}_s(n)$.
\end{theorem}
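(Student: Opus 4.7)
The plan is to reduce the cycle analysis of $R_n$ on $\mathbb{F}_q \cup \{\infty\}$ to the analysis of the $n$-th power map on the cyclic group $\mu_{q+1}$ of $(q+1)$-th roots of unity in $\mathbb{F}_{q^2}^\ast$. The bridge is the M\"obius map $\psi(x) = (x+\sqrt{a})/(x-\sqrt{a})$, with the conventions $\psi(\infty)=1$ and $\psi(\sqrt{a})=\infty$. Since $a$ is a non-square in $\mathbb{F}_q$, $\sqrt{a}\in\mathbb{F}_{q^2}\setminus\mathbb{F}_q$ and $(\sqrt{a})^q=-\sqrt{a}$; applying the Frobenius then shows $\psi(x)^q=\psi(x)^{-1}$, so $\psi(x)^{q+1}=1$ for every $x\in\mathbb{F}_q\cup\{\infty\}$. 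A M\"obius transformation is injective on the projective line, so by cardinality $\psi$ is a bijection from $\mathbb{F}_q\cup\{\infty\}$ onto $\mu_{q+1}$.

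Next, I would derive the intertwining identity
$$\psi(R_n(x,a)) = \psi(x)^n$$
by dividing the defining identities $(x+\sqrt{a})^n=G_n(x,a)+H_n(x,a)\sqrt{a}$ and $(x-\sqrt{a})^n=G_n(x,a)-H_n(x,a)\sqrt{a}$ and using $R_n=G_n/H_n$. This identity, together with the composition law $R_n\circ R_m=R_{nm}$ from (\ref{eq:compositRedei}), transfers cycle data verbatim: $x$ lies in an $R_n$-cycle of length $j$ if and only if $\psi(x)$ lies in a cycle of length $j$ under $y\mapsto y^n$ on $\mu_{q+1}$.

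I would then translate the cycle length on $\mu_{q+1}$ into the arithmetic condition on $n$. If $y\in\mu_{q+1}$ has order $s$ (so $s\mid q+1$), then $y^{n^j}=y$ iff $s\mid n^j-1$ iff $n^j\equiv 1\pmod{s}$, and the smallest such $j$ is by definition $\mbox{ord}_s(n)$. For the forward direction, any $R_n$-cycle of length $j$ provides an $x$ whose image has some order $s\mid q+1$ with $\mbox{ord}_s(n)=j$. For the backward direction, any divisor $s$ of $q+1$ admits an element of order $s$ in the cyclic group $\mu_{q+1}$, and its $\psi$-preimage lies in an $R_n$-cycle of length $\mbox{ord}_s(n)=j$.

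The main obstacle is bookkeeping rather than substance: one must verify carefully that $\psi$ lands in $\mu_{q+1}$ and is a bijection (rather than just an injection), and handle the identification $\infty\leftrightarrow 1$ so that the permanent fixed point of $R_n$ at $\infty$ is reconciled with the trivial $s=1$ case on the group side. Once the intertwining is in place, the theorem reduces to the elementary fact that in a cyclic group of order $q+1$ the cycle lengths of the $n$-th power map are precisely $\{\mbox{ord}_s(n):s\mid q+1\}$.
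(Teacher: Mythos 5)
Your proof is correct, but it takes a genuinely different route from the paper's. You conjugate $R_n$ by the M\"obius map $\psi(x)=(x+\sqrt{a})/(x-\sqrt{a})$, establish that $\psi$ is a bijection from $\mathbb{F}_q\cup\{\infty\}$ onto the group $\mu_{q+1}$ of $(q+1)$-th roots of unity in $\mathbb{F}_{q^2}^\ast$, and use the intertwining identity $\psi(R_n(x,a))=\psi(x)^n$ to reduce everything to the cycle structure of the $n$-th power map on a cyclic group of order $q+1$. The paper instead works directly with the condition $R_n^{(j)}(x,a)=x$, rewrites it as $G_{n^j}(x,a)=xH_{n^j}(x,a)$, sets $y=x+\sqrt{a}$ to obtain $y^{n^j-1}=H_{n^j}(x,a)\in\mathbb{F}_q$ and hence $y^{(n^j-1)(q-1)}=1$, and then argues about divisors of $q^2-1$; the two computations are close cousins, since your $\psi(x)$ equals $y^{1-q}$, but the paper never makes the group isomorphism explicit and its final chain of equivalences (passing from a divisor $t$ of $q^2-1$ with $t\mid(n^j-1)(q-1)$ to a divisor $s$ of $q+1$ with $n^j\equiv1\pmod{s}$) is stated rather than fully justified. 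Your version buys a cleaner and more conceptual argument in which both directions of the equivalence, including minimality of $j$, fall out of the elementary fact that an element of order $s$ in $\mu_{q+1}$ has orbit length exactly $\mbox{ord}_s(n)$ under $y\mapsto y^n$; the only care needed is the bookkeeping you already flag, namely that $\infty\mapsto 1$ and $0\mapsto -1$, so the fixed point at $\infty$ sits outside $\mathbb{F}_q$ while $0$ always supplies a cycle of length $1$ inside it. As a bonus, your framework also yields Theorem~\ref{th:numberoflengthjRedei} and Theorem~\ref{th:allsamelengthRedei} with almost no extra work.
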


\begin{proof}
Let $R_n^{(j)}(x,a)$ denote the $j$-th iterate of
$R_n(x,a)$ under the composition operation.
We get
\begin{eqnarray}
R_n^{(j)}(x,a)=x&\Longleftrightarrow& R_{n^{j}}(x,a)
    =x\Longleftrightarrow\frac{G_{n^j}(x,a)}{H_{n^j}(x,a)}=x\nonumber\\
&\Longleftrightarrow& G_{n^j}(x,a)=xH_{n^j}(x,a),\nonumber
\end{eqnarray}
where
the first equivalence is derived from~(\ref{eq:compositRedei}).
Furthermore, we get the following identities for $x\in\mathbb{F}_q$
\begin{eqnarray}
   ({x+\sqrt{a}})^{n^{j}}&=&G_{n^{j}}(x,a)+H_{n^{j}}(x,a)\sqrt{a}\nonumber\\
   &=&xH_{n^{j}}(x,a)+H_{n^{j}}(x,a)\sqrt{a}\nonumber\\
   &=&H_{n^{j}}(x,a)(x+\sqrt{a}).\nonumber
\end{eqnarray}
Let us assume that $y={x+\sqrt{a}}$, then
$y^{n^j-1}=H_{n^j}(x,a)\in\mathbb{F}_q$
and $y\in\mathbb{F}_{q^2}$.
Since $y^{n^j-1}\in\mathbb{F}_q$ we have that
$y^{(n^j-1)(q-1)}=1$. So, $R_n(x)$ has a cycle of length
$j$ if and only if $y^{(n^j-1)(q-1)}=1$ if and only if
$q^2-1$, which is the size of the multiplicative group
$\mathbb{F}^\ast_{q^2}$, has a divisor $t$ such that
$t|(n^{j}-1)(q-1)$ if and only if $q+1$ has a divisor $s$
where $n^j\equiv1\pmod{s}$ and $j$ is the smallest integer
with this property.
\end{proof}

Based on the above theorem and a simple counting technique,
we can find the explicit number of cycles of length $j$ for a
R\'edei function $R_n$ over $\mathbb{F}_q$.
\begin{theorem}\label{th:numberoflengthjRedei}
The number $N_j$ of cycles of length $j$ of the R\'edei function
$R_n$ over $\mathbb{F}_q$ with $\gcd(n,q+1)=1$ satisfies
$$jN_j+\sum_{\substack{i|j\\ i<j}} iN_i+1=\gcd(n^{j}-1,q+1).$$
\end{theorem}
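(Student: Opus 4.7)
The plan is a standard orbit-counting argument applied to the iterated function $R_n^{(j)} = R_{n^j}$ (the last equality being~(\ref{eq:compositRedei})). First I would let $F_j$ denote the number of $x\in\mathbb{F}_q$ fixed by $R_n^{(j)}$ and observe that, by the orbit decomposition of $\langle R_n\rangle$, the fixed set of $R_{n^j}$ is the disjoint union of all $R_n$-cycles whose length divides $j$, so that
\begin{equation*}
F_j \;=\; \sum_{i\,\mid\,j} i N_i.
\end{equation*}
Rearranging the target identity, it then suffices to show that $F_j = \gcd(n^j-1,\,q+1)-1$.

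For the count of $F_j$, I would reuse the derivation carried out inside the proof of Theorem~\ref{th:acyclejRedei}: for $x\in\mathbb{F}_q$, one has $R_{n^j}(x,a)=x$ iff $y := x+\sqrt{a}\in\mathbb{F}_{q^2}^{\ast}$ satisfies $y^{n^j-1}\in\mathbb{F}_q^{\ast}$, equivalently $y^{(n^j-1)(q-1)}=1$. Setting $H := \{y\in\mathbb{F}_{q^2}^{\ast} : y^{(n^j-1)(q-1)}=1\}$ and using that $\mathbb{F}_{q^2}^{\ast}$ is cyclic of order $(q-1)(q+1)$, the subgroup $H$ has order $\gcd((n^j-1)(q-1),(q-1)(q+1)) = (q-1)\gcd(n^j-1,q+1)$. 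Since every $c\in\mathbb{F}_q^{\ast}$ already satisfies $c^{q-1}=1$, the inclusion $\mathbb{F}_q^{\ast}\subseteq H$ is automatic, and passing to the quotient yields $\lvert H/\mathbb{F}_q^{\ast}\rvert = \gcd(n^j-1,q+1)$.

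Next I would relate $\mathbb{F}_q$ to $\mathbb{F}_{q^2}^{\ast}/\mathbb{F}_q^{\ast}$ via the map $\phi(x) := [x+\sqrt{a}]$. Using that $\{1,\sqrt{a}\}$ is an $\mathbb{F}_q$-basis of $\mathbb{F}_{q^2}$, comparing coordinates in $x_1+\sqrt{a}$ and $c(x_2+\sqrt{a})$ forces $c=1$, showing $\phi$ is injective; conversely, any class $[y]$ with $y=u+v\sqrt{a}$, $v\neq 0$, equals $[u/v+\sqrt{a}]=\phi(u/v)$, while the trivial class $[1]$ is not hit since $x+\sqrt{a}\notin\mathbb{F}_q$. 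Hence $\phi$ identifies $\mathbb{F}_q$ bijectively with $(\mathbb{F}_{q^2}^{\ast}/\mathbb{F}_q^{\ast})\setminus\{[1]\}$, and the fixed points of $R_{n^j}$ in $\mathbb{F}_q$ correspond exactly to $(H/\mathbb{F}_q^{\ast})\setminus\{[1]\}$, a set of cardinality $\gcd(n^j-1,q+1)-1$.

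The only delicate point — and the source of the ``$+1$'' in the statement — is the need to discount the trivial class $[1]$, which lies in $H/\mathbb{F}_q^{\ast}$ unconditionally but is \emph{not} in the image of $\phi$; in the projective model of $R_n$ acting on $\mathbb{F}_q\cup\{\infty\}$, this class corresponds to the always-fixed point at infinity. Combining the orbit identity $F_j=\sum_{i\mid j} iN_i$ with the group-theoretic count $F_j=\gcd(n^j-1,q+1)-1$ and splitting off the $i=j$ term yields the claimed formula.
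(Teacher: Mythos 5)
Your proof is correct, and it reaches the count by a somewhat different and tighter route than the paper's. Both arguments start identically: $R_n^{(j)}=R_{n^j}$ reduces the problem to counting fixed points of $R_{n^j}$, which by the computation inside Theorem~\ref{th:acyclejRedei} amounts to counting $x\in\mathbb{F}_q$ with $(x+\sqrt{a})^{(n^j-1)(q-1)}=1$. From there the paper stratifies the solutions by the divisors $s_0$ of $\gcd(q+1,n^j-1)$, writing $y=\rho^{c}$ with $\gcd(c,q+1)=(q+1)/s_0$, counting $\phi(s_0)$ solutions in each stratum, and summing $\sum_{s_0\mid\gcd(q+1,n^j-1)}\phi(s_0)=\gcd(q+1,n^j-1)$; you instead compute the order of the full solution subgroup $H$ in one stroke as $\gcd\bigl((n^j-1)(q-1),q^2-1\bigr)=(q-1)\gcd(n^j-1,q+1)$ and pass to the quotient $\mathbb{F}_{q^2}^{\ast}/\mathbb{F}_q^{\ast}$. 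What your version buys is precision at exactly the point where the paper is loosest: the injection $\phi(x)=[x+\sqrt{a}]$ onto the non-identity classes makes explicit which solutions of $y^{(n^j-1)(q-1)}=1$ actually correspond to elements of $\mathbb{F}_q$, and it locates the off-by-one correctly in the always-present identity class $[1]$ (the point at infinity in the projective model). The paper instead attributes the ``$+1$'' to the element $0$, which is not where the discrepancy really comes from: $x=0$ corresponds to $y=\sqrt{a}$, a nonzero element of $H$ that is already included in the enumeration. Your accounting is the accurate one, and combining $F_j=\sum_{i\mid j}iN_i$ with $F_j=\gcd(n^j-1,q+1)-1$ gives the stated identity.
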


\begin{proof}
Similar to the proof of Theorem~\ref{th:acyclejRedei}, we look
for $y\in\mathbb{F}^\ast_{q^2}$ such that\linebreak $y^{(n^j-1)(q-1)}=1$
and $y^{(n^j-1)}\in\mathbb{F}_q$. Let $\rho$ be a primitive
element of $\mathbb{F}^\ast_{q^2}$.
Let us assume that $s_0$ is a common divisor of $q+1$ and $n^j-1$.
Every $c$ with the property $\gcd(c,q+1)=\frac{q+1}{s_0}$ can
raise to a cycle of length $j$
$$\left(y,R_n(y),R_n^{(2)}(y),\ldots,R_n^{(j-1)}(y)\right)=
  \left(\rho^c,R_n(\rho^c),R_n^{(2)}(\rho^c),\ldots,R_n^{(j-1)}(\rho^c)\right).$$
On the other hand $\gcd(c,q+1)=\frac{q+1}{s_0}$ implies that
there exists $t_0\in\mathbb{N}$ such that $c=\frac{q+1}{s_0}t_0$.
Hence, we get
$$ \left(\rho^c\right)^{(n^{j}-1)(q-1)}=\left(\rho\right)^{c(n^{j}-1)(q-1)}=
   \left(\rho\right)^{\frac{n^{j}-1}{s_0}(q^2-1)t_0} 
 = \left(1\right)^{\frac{n^{j}-1}{s_0}t_0}=1, 
$$
and
$$ \left(\rho^c\right)^{(n^{j}-1)}=\left(\rho\right)^{c(n^{j}-1)}
 = \left(\rho\right)^{\frac{n^{j}-1}{s_0}(q+1)t_0}
 = \left(\rho^{q+1}\right)^{\frac{n^{j}-1}{s_0}t_0}\in\mathbb{F}_q,$$
where the last expression is true because the powers of $q+1$
of $\rho$ form $\mathbb{F}_q$ in $\mathbb{F}_{q^2}$. We are
interested in the number of $c=\frac{q+1}{s_0}t_0$ such that
$\gcd(c,q+1)=\frac{q+1}{s_0}$. This is the number of $t_0$
such that $\gcd(t_0,s_0)=1$ and $t_0\leq s_0$, which equals
$\phi(s_0)$ where $\phi(\cdot)$ denotes Euler's function.
Therefore, summing over all $\phi(s_0)$ with $s_0|\gcd(q+1,n^j-1)$
gives the number of elements that contribute to the cycles of
length $j$ and all its divisors $i$. We have
$$1+\sum_{i|j}iN_i=\sum_{s_0|\gcd(q+1,n^j-1)}\phi(s_0)=\gcd(q+1,n^j-1).$$
The last equality is derived from the fact that for every $n$
we have $\sum_{d|n}\phi(d)=n$. We note that 1 in the left hand
side accounts for the element $0$ since $R_n(0)=0$ gives an extra
fixed point and the above counting enumerates non-zero elements
$y\in\mathbb{F}_{q^2}$.
\end{proof}

Now, we are in a position to state our final expression. The following
theorem gives a general condition for a R\'edei function to have only cycles of length $j$ and $1$.
\begin{theorem}\label{th:allsamelengthRedei}
The R\'edei function $R_n$ of $\mathbb{F}_q$ with $\gcd(n,q+1)=1$
has all its cycles of length $j$ or $1$ if and only if for every
divisor $s$ of $q+1$ we have $n\equiv 1\pmod{s}$ or $j=ord_{s}(n)$.
\end{theorem}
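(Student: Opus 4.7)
The plan is to derive this theorem as an almost immediate corollary of Theorem~\ref{th:acyclejRedei}, which we have already proved. That theorem gives a precise characterization of which integers $j$ occur as cycle lengths of $R_n$, namely: $j$ is a cycle length of $R_n$ if and only if $j = \operatorname{ord}_s(n)$ for some divisor $s$ of $q+1$. Thus the full multiset of cycle lengths of $R_n$ is supported on the set $\{\operatorname{ord}_s(n) : s \mid q+1\}$, and the statement of the present theorem is really an assertion that this set is contained in $\{1,j\}$.

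For the forward direction, I would assume that every cycle of $R_n$ has length $j$ or $1$. Fix an arbitrary divisor $s$ of $q+1$ and let $j' = \operatorname{ord}_s(n)$. By Theorem~\ref{th:acyclejRedei}, $R_n$ has a cycle of length $j'$, so by hypothesis $j' \in \{1, j\}$. If $j' = 1$, then by definition of the multiplicative order we have $n \equiv 1 \pmod{s}$; otherwise $j' = j = \operatorname{ord}_s(n)$. Either way the required disjunction holds for $s$.

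For the converse, I would suppose that for every divisor $s$ of $q+1$ either $n \equiv 1 \pmod{s}$ or $\operatorname{ord}_s(n) = j$. The first condition is equivalent to $\operatorname{ord}_s(n) = 1$, so in all cases $\operatorname{ord}_s(n) \in \{1, j\}$. Now let $\ell$ be the length of any cycle of $R_n$; by Theorem~\ref{th:acyclejRedei} (used in the other direction) there is a divisor $s$ of $q+1$ with $\ell = \operatorname{ord}_s(n)$, hence $\ell \in \{1, j\}$. This shows that all cycles have length $j$ or $1$.

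There is no real obstacle here: the content of the theorem is already packaged in Theorem~\ref{th:acyclejRedei}, and the proof is a matter of translating the disjunction in the statement into the two possible values of $\operatorname{ord}_s(n)$. The only small point worth noting explicitly is that the fixed point $0$ (and, when $n \equiv 1 \pmod{s}$ for some $s>1$, other fixed points) are correctly captured by the case $\operatorname{ord}_s(n) = 1$, so requiring $n \equiv 1 \pmod s$ and requiring $\operatorname{ord}_s(n) = 1$ really are interchangeable in the hypothesis.
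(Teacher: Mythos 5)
Your proof is correct and is exactly the argument the paper intends: the paper states Theorem~\ref{th:allsamelengthRedei} without any written proof, presenting it as an immediate consequence of Theorem~\ref{th:acyclejRedei}, which is precisely the derivation you give. Reading Theorem~\ref{th:acyclejRedei} as saying that the set of cycle lengths of $R_n$ is $\{\mbox{ord}_s(n)\colon s\mid q+1\}$, and identifying the condition $n\equiv 1\pmod{s}$ with $\mbox{ord}_s(n)=1$, both directions follow as you describe.
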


\begin{corollary}\label{cor:allsamelength2Redei}
The R\'edei function $R_n$ of $\mathbb{F}_q$ with $\gcd(n,q+1)=1$
is self-inverse if and only if $n^2\equiv1\pmod{q+1}$.
\end{corollary}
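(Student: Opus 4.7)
The plan is to deduce the corollary directly from Theorem~\ref{th:Redeiinverse} together with the composition identity~(\ref{eq:compositRedei}) and the identity characterization~(\ref{eq:idRedei}). Indeed, $R_n$ is self-inverse precisely when $R_n \circ R_n = \mathrm{id}_{\mathbb{F}_q}$, and~(\ref{eq:compositRedei}) reduces this composition to $R_{n^2}$. Applying~(\ref{eq:idRedei}) then converts the identity condition into the congruence $n^2 \equiv 1 \pmod{q+1}$. This is the whole proof in three short steps, and there is no real obstacle once the preceding machinery is in place.

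A slightly more structural alternative is to invoke Theorem~\ref{th:Redeiinverse} directly: since $R_n^{-1} = R_m$ with $nm \equiv 1 \pmod{q+1}$, self-inversion forces $R_n = R_m$, and the next step would be to argue that two Rédei functions $R_n$ and $R_m$ coincide on $\mathbb{F}_q$ iff $n \equiv m \pmod{q+1}$. Substituting $m \equiv n^{-1}$ then yields $n \cdot n \equiv 1 \pmod{q+1}$. I would prefer the first approach because it avoids the small side lemma about when two Rédei functions agree; it is cleaner to simply compose $R_n$ with itself and quote~(\ref{eq:idRedei}).

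For completeness I would also observe that the corollary is a special case of Theorem~\ref{th:allsamelengthRedei} with $j = 2$: self-inverseness is equivalent to every nontrivial cycle having length $2$, and Theorem~\ref{th:allsamelengthRedei} says this holds iff for every divisor $s$ of $q+1$ we have $n \equiv 1 \pmod{s}$ or $\mathrm{ord}_s(n) = 2$, i.e., $n^2 \equiv 1 \pmod{s}$ in either case. Taking $s = q+1$ gives the necessity, and conversely $n^2 \equiv 1 \pmod{q+1}$ forces $n^2 \equiv 1 \pmod{s}$ for every divisor $s$, giving sufficiency. The only minor wrinkle is to double check that the logical ``or'' in Theorem~\ref{th:allsamelengthRedei} subsumes the case $n \equiv -1 \pmod s$ under $\mathrm{ord}_s(n) = 2$; this is automatic. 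Given how short the argument is, I would present only the direct one via $R_{n^2} = \mathrm{id}$ and relegate the Theorem~\ref{th:allsamelengthRedei} viewpoint to a remark.
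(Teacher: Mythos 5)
Your primary argument is correct and is genuinely more direct than the one in the paper. You compose $R_n$ with itself, use~(\ref{eq:compositRedei}) to identify $R_n\circ R_n$ with $R_{n^2}$, and then apply the identity criterion~(\ref{eq:idRedei}) to conclude $R_{n^2}=\mathrm{id}$ iff $n^2\equiv1\pmod{q+1}$; since $\gcd(n,q+1)=1$ implies $\gcd(n^2,q+1)=1$, nothing goes wrong in invoking these identities. The paper instead takes exactly the route you relegate to a remark: it sets $j=2$ in Theorem~\ref{th:allsamelengthRedei}, rewrites ``$n\equiv1\pmod{s}$ or $\mathrm{ord}_s(n)=2$ for every $s\mid q+1$'' as $n^2\equiv1\pmod{s}$ for every such $s$, and observes this is equivalent to $n^2\equiv1\pmod{q+1}$. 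Your divisor-condition analysis of that route is also correct (taking $s=q+1$ for necessity, and noting that $n^2\equiv1\pmod{q+1}$ forces the condition for every divisor). What your preferred route buys is independence from Theorem~\ref{th:allsamelengthRedei}, which the paper states without proof; it rests only on the Carlitz identities~(\ref{eq:compositRedei}) and~(\ref{eq:idRedei}), so it is both shorter and better grounded within the paper. What the paper's route buys is that the corollary is exhibited as the $j=2$ instance of the general cycle-structure theorem, which fits the surrounding narrative about cycles of length $j$ or $1$. Either proof is acceptable; yours is arguably the cleaner one to present, with the cycle-structure viewpoint as a remark, exactly as you propose.
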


\begin{proof}
One can insert $j=2$ in Theorem~\ref{th:allsamelengthRedei} and produce a self-inverse
R\'edei function. Thus, $R_n(x)$ is a self-inverse
function if and only if
\begin{equation}\label{eq:selfinverseRedei}
 \mbox{for~all}~s|q+1,\quad\left\{
 \begin{array}{l}
  n\equiv 1\pmod{s}~\mbox{or}\\
  n^2\equiv 1\pmod{s}.
 \end{array}\right.
\end{equation}
It is easy to see that~(\ref{eq:selfinverseRedei})
is equivalent to $n^2\equiv1\pmod{q+1}$ and this simply
shows that $m=n$ for $R_m(x)=R^{-1}_n(x)$.
\end{proof}

Now we are able to find all self-inverse R\'edei functions
by using the above corollary. Also, one can apply self-inverse
R\'edei functions to produce self-inverse R\'edei interleavers.
\begin{theorem}\label{th:factors}
 Let $q+1=p_0^{k_0}p_1^{k_1}\ldots p_r^{k_r}$. The permutation of $\mathbb{F}_q$
given by the R\'edei function $R_n$ has cycles of the same length $j$
or fixed points if and only if one of the following conditions holds for each
$0\leq \ell\leq r$
\begin{itemize}
\item $n\equiv1\pmod{p_\ell^{k_\ell}}$,
\item $j=\mbox{ord}_{p_\ell^{k_\ell}}(n)$ and $j|p_\ell-1$,
\item $j=\mbox{ord}_{p_\ell^{k_\ell}}(n)$, $k_\ell\geq2$ and $j=p_\ell$.
\end{itemize}
\end{theorem}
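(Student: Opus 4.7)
The plan is to derive Theorem~\ref{th:factors} from Theorem~\ref{th:allsamelengthRedei}, which reduces the problem to the arithmetic assertion that, for every divisor $s$ of $q+1$, either $n\equiv 1\pmod{s}$ or $\mbox{ord}_s(n)=j$. Writing $q+1=\prod_{\ell=0}^{r} p_\ell^{k_\ell}$, each divisor has the form $s=\prod_\ell p_\ell^{m_\ell}$ with $0\le m_\ell\le k_\ell$, and by the Chinese Remainder Theorem $\mbox{ord}_s(n)=\mbox{lcm}_\ell\,\mbox{ord}_{p_\ell^{m_\ell}}(n)$. So I would reformulate the divisor condition as a collection of constraints, one per prime-power factor of $q+1$.

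For necessity, apply the condition with $s=p_\ell^{k_\ell}$ to obtain either $n\equiv 1\pmod{p_\ell^{k_\ell}}$ (case (a) of the theorem) or $\mbox{ord}_{p_\ell^{k_\ell}}(n)=j$. In the latter subcase, the condition at $s=p_\ell$ forces $\mbox{ord}_{p_\ell}(n)\in\{1,j\}$, and since this order divides $p_\ell-1$, either $j\mid p_\ell-1$ (case (b)) or $n\equiv 1\pmod{p_\ell}$. In this second alternative, write $n=1+p_\ell^e v$ with $e\ge 1$ and $\gcd(v,p_\ell)=1$; the standard lifting-the-exponent identity $n^{p_\ell^t}\equiv 1+p_\ell^{e+t}v\pmod{p_\ell^{e+t+1}}$ (with the usual adjustment for $p_\ell=2$) forces $\mbox{ord}_{p_\ell^m}(n)=p_\ell^{\max(0,m-e)}$. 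Requiring this order to lie in $\{1,j\}$ for every $m\le k_\ell$ gives $j=p_\ell$ and $k_\ell\ge 2$, which is case (c). Conversely, for sufficiency I would fix a divisor $s=\prod p_\ell^{m_\ell}$ and compute each local factor: case (a) gives order $1$; case (b), using $\gcd(j,p_\ell)=1$ so that the $j$-torsion of $(\mathbb{Z}/p_\ell^{m_\ell}\mathbb{Z})^\ast$ maps isomorphically onto that of $(\mathbb{Z}/p_\ell\mathbb{Z})^\ast$, gives order $j$ whenever $m_\ell\ge 1$; case (c) gives order in $\{1,p_\ell\}=\{1,j\}$ by the same lifting calculation. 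The lcm over $\ell$ then lies in $\{1,j\}$, and equals $1$ exactly when $n\equiv 1\pmod{s}$, matching Theorem~\ref{th:allsamelengthRedei}.

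The main obstacle is the case analysis at $p_\ell=2$, because $(\mathbb{Z}/2^m\mathbb{Z})^\ast$ is non-cyclic for $m\ge 3$ and one must handle the $\langle -1\rangle\times\langle 5\rangle$ decomposition separately to verify that the same trichotomy captures precisely when all orders at the smaller $2$-power divisors land in $\{1,j\}$. Apart from this, the argument is a direct transcription of the proof of the monomial analogue Theorem~\ref{th:monPPsamelength} from~\cite{rob2}, since both results depend only on the multiplicative structure of $\mathbb{Z}/m\mathbb{Z}$---with $m=q+1$ here in place of $m=q-1$ there.
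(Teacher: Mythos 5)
Your proposal is correct and takes essentially the same route as the paper: reduce to Theorem~\ref{th:allsamelengthRedei} and then analyze $\mbox{ord}_{p_\ell^{m}}(n)$ over all prime-power divisors of $q+1$, following the monomial template of~\cite{rob2}. The only organizational difference is that the paper packages the local computations as a ladder of lemmas (lifting the order from $p^{\ell}$ to $p^{\ell+1}$, plus two CRT-type lemmas) stated uniformly in $p$, which sidesteps the separate $p_\ell=2$ analysis that your explicit formula $\mbox{ord}_{p_\ell^{m}}(n)=p_\ell^{\max(0,m-e)}$ forces; the check you flag there does go through.
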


\begin{proof}
We begin stating some easy lemmas and propositions.
\begin{lemma}\label{lem:1}
If $n\equiv b\pmod{p^\ell}$, then $n^p\equiv b^p\pmod{p^{\ell+1}}$
for all $l\geq1$.
\end{lemma}
\begin{lemma}\label{lem:2}
Let $j=\mbox{ord}_{p^\ell}(n)$. Then $j=\mbox{ord}_{p^{\ell+1}}(n)$
or $jp=\mbox{ord}_{p^{\ell+1}}(n)$.
\end{lemma}
\begin{proposition}\label{prop:1}
We have that $j=\mbox{ord}_{p^k}(n)$ and $j|p-1$ if and only if $j=\mbox{ord}_{p^\ell}(n)$ for all $1\leq \ell\leq k$.
\end{proposition}
\begin{lemma}\label{lem:3}
Let $p=\mbox{ord}_{p^k}(n)$ for some $k\geq2$. Then either
$2=p=\mbox{ord}_{p^\ell}(n)$ for $2\leq \ell\leq k$ or
$n\equiv1\pmod{p^\ell}$ for $1\leq \ell\leq k$.
\end{lemma}
\begin{lemma}\label{lem:4}
Let $j=\mbox{ord}_{s}(n)$, $j=\mbox{ord}_{\ell}(n)$ and $\gcd(s,\ell)=1$.
Then $j=\mbox{ord}_{s\ell}(n)$.
\end{lemma}
\begin{lemma}\label{lem:5}
Let $j=\mbox{ord}_{s}(n)$, $n\equiv 1\pmod{\ell}$ and $\gcd(s,\ell)=1$.
Then $j=\mbox{ord}_{s\ell}(n)$.
\end{lemma}
Now we proceed to give the main proof for Theorem~\ref{th:factors}
($\Longleftarrow$) If $n\equiv1\pmod{p_\ell^{k_\ell}}$
for all $0\leq \ell\leq r$, then $R_n(x)$ is the identity
permutation. Suppose that $1<j=\mbox{ord}_{p_\ell^{k_\ell}}(n)$
for some of the $l$'s and $n\equiv1\pmod{p_\ell^{k_\ell}}$
for the others. Proposition~\ref{prop:1} and Lemma~\ref{lem:3}
guarantee that $j=\mbox{ord}_{p_\ell^k}(n)$ or $n\equiv1\pmod{p^k_\ell}$
for all $0\leq \ell\leq r$ and $1\leq k\leq k_\ell$. Now, if
$t|(q+1)$, then by Lemmas~\ref{lem:4} and~\ref{lem:5},
we have that, $j=\mbox{ord}_t(n)$ or $n\equiv1\pmod{t}$.
Hence, by Theorem~\ref{th:allsamelengthRedei},
all the cycles have length $j$ or $1$.\\
($\Longrightarrow$) Suppose that all the cycles
have the same length $j$. Then, by Theorem~\ref{th:allsamelengthRedei},
$j=\mbox{ord}_t(n)$ or $n\equiv1\pmod{t}$ for all $t$
that divides $q+1$. This holds in particular for $t=p_\ell^{k_\ell}$;
$0\leq \ell\leq r$. We only have to prove that, if
$j=\mbox{ord}_{p_\ell^{k_\ell}}(n)$ then $j|(p_\ell-1)$ or $j=p_\ell$; $k_\ell\geq2$.
Suppose that $1\neq j=\mbox{ord}_{p_\ell^{k_\ell}}(n)$.
If $k_\ell=1$ then $j|(p_\ell-1)$ and we are done.
If $k_\ell\geq2$ and $j\not|(p_\ell-1)$,
then Proposition~\ref{prop:1} implies that
$j\neq \mbox{ord}_{p_\ell^k}(n)$ for some $k<k_\ell$.
Let $s$ be the largest one such that $j\neq\mbox{ord}_{p_\ell^s}(n)$.
Then $n\equiv1\pmod{p_\ell^s}$ because otherwise, by
Theorem~\ref{th:allsamelengthRedei}, there would
be a cycle of length different from $j$. By Lemma~\ref{lem:1},
$i^{p_\ell}\equiv1\pmod{p_\ell^{s+1}}$. But $j=\mbox{ord}_{p^{s+1}_\ell}(n)$
implies that $j|p_\ell$ and hence $j=p_\ell$.
It has to be noted that the above proof is similar to the
proof of Theorem 2 in~\cite{rob2}.
\end{proof}

\subsection{Skolem interleavers}\label{SkolemInterleavers}

In this section we construct other types of interleavers. In
this case our underlying structure are various types of Skolem
sequences including $k$-extended, hooked and $(j,n)$-generalized.
Clearly, we shall use $(j,n)$-generalized Skolem sequences
to produce interleavers with cycles of length $1$ and $j$ only.
First of all, let us make a slight modification to every type
of Skolem sequences to turn them into a consistent form for
using in our construction method. Let us assume that
$S=(s_1,\ldots,s_t)$ be a Skolem sequence (not a generalized
one but maybe $k$-extended or even hooked) over a set of integers
$D$. For every $i\in [n]$ if $\ell$, $2\leq \ell\leq t$, is the
largest index such that $s_{\ell}=i$ then convert $s_{\ell}$ to
$-s_{\ell}$ i.e. put $-i$ instead of $i$ in the $\ell$th position.
In the case that $S$ is a $(j,n)$-generalized Skolem sequence
just insert $-(j-1)i$ instead of $i$ in the $\ell$th position.
These changed Skolem sequences are called \emph{modified Skolem
sequences} ({\em modified generalized Skolem sequences},
respectively). Now, our strategy is to reorder any set of integers
of length $t$, say $I=\{1,\ldots,t\}$, based on a modified
(generalized) Skolem sequence of length $t$, $S=(s_1,\ldots,s_t)$.

Let us assume that $S^m$ is a modified (generalized) Skolem sequence
of order $n$ with alphabet $[n]$. If $i\in [n]$ repeats on positions
$u$ and $v$ where $u<v$, then $s^m_u=i$ and $s^m_v=-i$ and $v-u=i$.
Now, define the interleaver $\Pi_S$ by sending $u$ to $v$. More
precisely
\begin{equation}\label{def:SSinterleavers}
\Pi_S(u)=u+s^m_u=u+i=v.
\end{equation}
We observe that holes or zeros of our sequence produce fixed points.
In other words, if $s^m_h=0$ for some $1\leq h\leq m$, then
$\Pi_S(h)=h+s^m_h=h+0=h$. If we imagine the indices of our modified
Skolem sequence as time and the amounts of our Skolem sequence as
location (domain), then this interleaver may be interpreted as a
combination of time and domain.
\begin{ex}
The sequence $S=(2,5,2,6,1,1,5,3,4,6,3,0,4)$ is a hooked Skolem
sequence. First we have to convert it to a modified hooked Skolem
sequence. If we denote the modified version of $S$ by $S^m$, we get
$$S^m=(2,5,-2,6,1,-1,-5,3,4,-6,-3,0,-4).$$
Thus, based on~(\ref{def:SSinterleavers}), we have
$$
\begin{array}{lll}
\Pi_S(1)=1+s^m_1=3,&
\Pi_S(2)=2+s^m_2=7,&
\Pi_S(3)=3+s^m_3=1,\\
\Pi_S(4)=4+s^m_4=10,&
\Pi_S(5)=5+s^m_5=6,&
\Pi_S(6)=6+s^m_6=5,
\end{array}
$$
$$
\begin{array}{lll}
\Pi_S(7)=7+s^m_7=2,&
\Pi_S(8)=8+s^m_8=11,&
\Pi_S(9)=9+s^m_9=13,\\
\Pi_S(10)=10+s^m_{10}=4,&
\Pi_S(11)=11+s^m_{11}=8,&
\Pi_S(12)=12+s^m_{12}=12,\\
\Pi_S(13)=13+s^m_{13}=9.&&
\end{array}
$$
The above equalities induce the following Skolem interleaver
$$\left(\begin{array}{ccccccccccccc}
1&2&3&4&5&6&7&8&9&10&11&12&13\\
3&7&1&10&6&5&2&11&13&4&8&12&9
\end{array}\right).$$
\end{ex}

\begin{theorem}\label{th:skolemint}
Let $\Pi_S$ be an interleaver constructed using a modified
Skolem sequence. Then $\Pi_S$ is a self-inverse interleaver.
Furthermore, if we use a modified $(j,n)$-generalized Skolem
sequence, then $\Pi_S$ has only cycles of length $j$ or $1$.
\end{theorem}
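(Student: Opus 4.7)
The plan is to show directly from the definition of $\Pi_S$ in~(\ref{def:SSinterleavers}) that the positions occupied by a fixed symbol $i$ in the (generalized) modified Skolem sequence form precisely one cycle of the permutation, while the zeros/holes become fixed points. Everything then reduces to reading off cycle lengths.

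First I would handle the standard (modified) Skolem case. Pick any $i\in[n]$ and let $a_i<b_i$ be its two positions, so $b_i-a_i=i$, $s^m_{a_i}=i$ and $s^m_{b_i}=-i$. The definition gives
\[
\Pi_S(a_i)=a_i+s^m_{a_i}=a_i+i=b_i,\qquad
\Pi_S(b_i)=b_i+s^m_{b_i}=b_i-i=a_i,
\]
so $\Pi_S$ swaps $a_i$ and $b_i$. Hence on these two positions $\Pi_S$ is a transposition, and combining over all $i$ together with the fact that any hole position $h$ satisfies $\Pi_S(h)=h+0=h$ shows $\Pi_S$ is an involution; in particular $\Pi_S=\Pi_S^{-1}$. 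This covers $k$-extended and hooked Skolem sequences too, since the modification rule is identical and the extra hole only contributes one more fixed point.

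Next I would treat the $(j,n)$-generalized case analogously. For each $i\in[n]$ let $r_1<r_2<\cdots<r_j$ be its $j$ positions, with $r_{k+1}=r_k+i$ for $1\le k\le j-1$. Under the modification, $s^m_{r_k}=i$ for $k<j$ and $s^m_{r_j}=-(j-1)i=r_1-r_j$. Then
\[
\Pi_S(r_k)=r_k+i=r_{k+1}\quad(1\le k<j),\qquad
\Pi_S(r_j)=r_j-(j-1)i=r_1,
\]
so $\{r_1,\dots,r_j\}$ is a single $\Pi_S$-cycle of length exactly $j$. Summing over all $i\in[n]$ partitions the non-zero positions into cycles of length $j$, and any zero positions remain as fixed points, yielding the claim.

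The only thing requiring minor care is checking that the modification is well-defined and the claimed cycles are disjoint: this is immediate because each position of the sequence contains exactly one symbol, so the cycles indexed by distinct $i$'s use disjoint index sets and together cover all non-zero positions. I do not anticipate any real obstacle — the main substance of the theorem is just recognizing that the modification by a sign (resp.\ by the factor $-(j-1)$) is designed precisely to close up each fiber of positions into one cycle.
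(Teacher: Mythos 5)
Your proposal is correct and follows essentially the same route as the paper's proof: compute $\Pi_S$ on the positions of each symbol $i$ to get a transposition (resp.\ a $j$-cycle $(r_1,\ldots,r_j)$ in the generalized case) and note that holes give fixed points. Your added remark that the cycles for distinct symbols are disjoint because each position holds exactly one symbol is a small clarification the paper leaves implicit, but otherwise the arguments coincide.
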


\begin{proof}
Let $S=(s_1,\ldots,s_m)$ and its modified version, $S^m$,
are used to produce $\Pi_S$. For every $i\in D$ there exist
indices $u$ and $v$ such that $u<v$, $s^m_u=i$, $s^m_v=-i$
and $v-u=i$. Based on the definition of $\Pi_S$ we get
$$\Pi_S(u)=u+s^m_u=u+i=v,\quad\Pi_S(v)=v+s^m_v=v-i=u,$$
and also if $s^m_h=0$ for some $h$ we have $\Pi_S(h)=h+s^m_h=h+0=h$.
Hence, $\Pi_S$ is a self-inverse interleaver.
On the other hand let us suppose that $S$ is a $(j,n)$-generalized Skolem sequence
and $S^m$ be its modified version. For every $i\in [n]$ there are indices
$r_1,r_2=r_1+i,\ldots,r_j=r_1+(j-1)i$ such that
$s^m_{r_1}=s^m_{r_2}=\cdots=s^m_{r_{j-1}}=i$ and $s^m_{r_j}=-(j-1)i$.
Since $r_{w+1}=r_w+i$ and based on~(\ref{def:SSinterleavers}) we get
$$\Pi_S(r_w)=r_w+s^m_{r_w}=r_w+i=r_{w+1}$$
for $1\leq w\leq j-1$, and for $w=j$ we have
$$\Pi_S(r_j)=r_j+s^m_{r_j}=r_j-(j-1)i=r_1.$$
It means that $(r_1,r_2,\ldots,r_j)$ builds a cycle of length $j$
for $\Pi_S$. More explicitly, $\Pi_S(r_w)=r_{w+1}$ for $1\leq w\leq j-1$
and $\Pi_S(r_j)=r_1$.
What we did for holes (zeros) of a modified Skolem sequence,
can be likewise done for generalized Skolem sequences.
\end{proof}

\section{Conclusion and further research}\label{conclusion}


Some deterministic interleavers have been introduced and investigated
based on permutation functions over finite fields. Well-known
permutation functions have been explained. R\'edei functions
treated in detail. We derived an exact formula for the inverse
of every R\'edei function. We gave the cycle structure of these functions
and provided the exact number of cycles of a certain length $j$.
Specifically, we focused on self-inverse permutation
functions which can produce self-inverse interleavers and have
the potential to reduce the memory consumption~\cite{ryu}.

The Skolem interleavers could be useful for making interleavers with a
known cycle structure. We have the same possible option and
characteristics for Dickson and R\'edei permutations as well.
More specifically, suppose that we are given a cycle structure
$(i_1,\ldots,i_n)$ for a permutation $\sigma$ where $i_j$
denotes the number of cycles of length $j$ in $\sigma$.
Clearly, we have $\sum_{j=1}^nji_j=n$. Assume that we want
to construct an interleaver following the structure of the
permutation $\sigma$.
We need $i_j$ cycles of length $j$. For example we can use a
$(j,ji_j)$-generalized Skolem sequence. Even monomial, Dickson
or R\'edei permutations can be employed to produce these cycles.
To this end, the cycle structure of monomial, Dickson and
R\'edei permutations can be determined using results of
Theorems~\ref{th:monPPsamelength},~\ref{th:DickPPsamelength1}
and~\ref{th:allsamelengthRedei}.

Clearly, more theoretical studies to provide maximum achievable
minimum distance, spread factor and dispersion~\cite{cor,cor2,rec}
of these interleavers are of interest.
Moreover, finding a suitable relationship
between cycles and sequences introduced in~\cite{dimitry}
and cycle structure of permutations seems worthwhile.

\medskip
Received November 2011; revised May 2012.

\medskip

{\it E-mail address:} amin$\_$sakzad@aut.ac.ir\\
\indent{\it E-mail address:} msadeghi@aut.ac.ir\\
\indent{\it E-mail address:} daniel@math.carleton.ca

\end{document}